\documentclass[hidelinks,onefignum,onetabnum]{siamart251216}

\usepackage{lipsum}
\usepackage{amsfonts}
\usepackage{graphicx}
\usepackage{epstopdf}
\usepackage{algorithmic}

\usepackage{subcaption}
\usepackage{placeins}
\usepackage{verbatim}
\usepackage{color}
\usepackage{fancyhdr}

\newcommand{\calA}{\mathcal A}
\newcommand{\calS}{\mathcal S}
\newcommand{\calI}{\mathcal I}

\newcommand{\psucc}{\mathbb P_{\mathrm{succ}}}

\definecolor{ddk}{RGB}{30,144,255}

\definecolor{mjx}{RGB}{230,150,0}

\newcommand{\ket}[1]{\left| #1 \right\rangle}
\newcommand{\bra}[1]{\left\langle #1 \right|}
\newcommand{\braket}[2]{\left\langle #1 \middle| #2 \right\rangle}

\newcommand{\ti}[1]{\widetilde{#1}}

\ifpdf
  \DeclareGraphicsExtensions{.eps,.pdf,.png,.jpg}
\else
  \DeclareGraphicsExtensions{.eps}
\fi

\newsiamremark{remark}{Remark}
\newsiamremark{hypothesis}{Hypothesis}
\crefname{hypothesis}{Hypothesis}{Hypotheses}
\newsiamthm{claim}{Claim}
\newsiamremark{fact}{Fact}
\crefname{fact}{Fact}{Facts}

\headers{Random Grover Search}{D. Dong, Y. Li, and J. Ma}

\title{Random Grover Search\thanks{Submitted to the editors DATE.\funding{This work was supported in part by the National Natural Science Foundation of China under
Grant Nos. 12271109 and 12526211; by the Shanghai Pilot Program for Basic Research-Fudan
University under Grant No. 21TQ1400100 (22TQ017); by the Scientific Research Innovation Capability Support Project for Young Faculty under Grant No. SRICSPYF-ZY2025159; and by the
Xuemin Institute of Advanced Studies, Fudan University.}}}

\author{Dekuan Dong\thanks{School of Mathematical Sciences, Fudan University, Shanghai, 200433 China (\email{dkdong21@m.fudan.edu.cn}). These authors contributed equally to this work.}
\and Yingzhou Li\thanks{School of Mathematical Sciences, Fudan University, Shanghai, 200433 China, and Shanghai Key Laboratory for Contemporary Applied Mathematics, Shanghai, China (Corresponding author: \email{yingzhouli@fudan.edu.cn}).}
\and Jiaxin Ma\thanks{School of Mathematical Sciences, Fudan University, Shanghai, 200433 China (\email{majx23@m.fudan.edu.cn}). These authors contributed equally to this work.}}

\usepackage{amsopn}

\ifpdf
\hypersetup{
  pdftitle={Random Grover Search},
  pdfauthor={D. Dong, Y. Li, and J. Ma}
}
\fi

\begin{document}

\maketitle

\begin{abstract}
Grover's algorithm achieves a quadratic speedup for unstructured search given a global oracle for the target set. In many applications, however, the target set is specified as the intersection of multiple constraint sets. Constructing a global oracle for the intersection can be costly, whereas the individual constraint oracles are often much simpler to implement.

We study a randomized Grover search algorithm that directly uses these constraint oracles. At each iteration, one of the corresponding Grover operators is selected at random. For the two-operator case with uniform sampling, we prove that the success probability approaches one after
\[
\Theta \left(\frac\pi4\sqrt{\frac{N}{r}}\right)
\]
iterations, where \(r\) is the size of the intersection. Thus, the algorithm achieves the same asymptotic query complexity as standard Grover search but without requiring a global oracle. We then generalize the analysis to arbitrary sampling distributions and an arbitrary number of Grover operators through an auxiliary operator that approximates the expected Grover evolution, while retaining the same asymptotic complexity. We further show that highly biased sampling distributions can still achieve near-unit success probability, enabling cheaper Grover operators to be used more frequently. Finally, we prove asymptotic optimality and support the theoretical results with numerical simulations.
\end{abstract}

\begin{keywords}
Grover search, randomized quantum algorithms, quantum search
\end{keywords}

\begin{MSCcodes}
81P68, 68Q12, 68W20
\end{MSCcodes}

\section{Introduction}
Grover’s algorithm \cite{10.1145/237814.237866} is a fundamental quantum algorithm that achieves a quadratic speedup for unstructured search. Given oracle access to a Boolean function $f:[N]\to{0,1}$, it finds a marked item using $\mathcal{O}(\sqrt{N})$ queries, which is optimal in the black-box model \cite{doi:10.1137/S0097539796300933}. 

The standard formulation assumes a single oracle defining the solution space. In many applications, however, the solution is specified implicitly as the intersection of multiple constraints. Such settings arise naturally in constraint satisfaction, combinatorial search \cite{PhysRevA.67.022314,Chen2022,Li2024resourceefficient}, and database filtering tasks \cite{salman2012quantum}. Related work on quantum set operations has been studied in \cite{Pang2013SetOperation,elgendy2024efficient}, but existing Grover-based approaches are restricted to two-set intersections and rely on problem-specific spectral analyses that do not generalize easily. More recent structured or partial-search methods \cite{bolton2024accelerated} suggest potential improvements but typically require stronger oracle assumptions or specialized settings. Randomized or dynamic variants of Grover search have also been considered \cite{Li2024resourceefficient,chakrabarty2017dynamic,Gilliam2021GAS}, highlighting the potential of stochastic modifications.

Formally, suppose the solution set is
\[\calA = \bigcap_{i=0}^{k-1} \calA_i\]
where each $\mathcal{A}_i$ can be checked by an efficient oracle. While one may construct a global oracle for $\mathcal{A}$ and apply Grover’s algorithm directly \cite{Li2024resourceefficient}, this is often inefficient, as combining constraint oracles into a single unitary can incur substantial overhead and circuit complexity \cite{jaques2019implementinggroveroraclesquantum}. This motivates the question:
\begin{quote}
    \sloppy Can we solve the intersection search problem by directly using the simpler subproblem oracles, without explicitly constructing the global oracle?
\end{quote}
A natural idea is to associate each subset $\calA_i$ with its corresponding Grover operator $G_i$, and attempt to combine them to amplify the amplitude of the intersection $\cap_i \calA_i$. However, this approach faces an immediate difficulty: unlike the standard Grover operator, different $G_i$ generally do not commute, and their composition does not preserve the two-dimensional invariant subspace structure that underlies standard Grover algorithm. As a result, deterministic compositions of these operators are difficult to analyze and do not admit rigorous performance guarantees for coherent amplitude amplification toward the desired solution space. 

In this work, we analyze a different approach based on randomization. Instead of deterministically composing Grover operators, we consider a stochastic process in which, at each step, a Grover operator $G_i$ is selected at random and applied to the current state. This leads to a randomized sequence of unitary transformations, whose collective behavior is nontrivial but analyzable. Similar ideas of randomized unitary compositions have been successfully used in Hamiltonian simulation, where random product formulas can approximate target dynamics efficiently (e.g., qDRIFT and its higher order variants such as qSWIFT 
\cite{nakaji2023qswift,campbell2019qdrift}, random permutation of the product formula terms \cite{ChildsSu2019Randomization,Zhang2012RandomizedHamiltonian}). However, in contrast to prior work that focuses on simulation accuracy, our goal is to understand whether such randomized dynamics can serve as a search mechanism.

The main contributions of this paper are summarized as follows. 
\begin{itemize}
    \item We analyze the randomized Grover algorithm with two operators under uniform sampling. Exploiting the special structure of the expected Grover operator, we show that after 
    \[\Theta\left(\frac\pi4 \sqrt{\frac Nr}\right)\]
    iterations, the success probability reaches $1 - \mathcal O(1/\sqrt{N})$, where $r:= |\calA_0 \cap \calA_1|$. Thus, the proposed algorithm achieves the same asymptotic performance as standard Grover search without requiring a global oracle that marks elements in $\calA_0 \cap \calA_1$.
    
    \item We extend the analysis of the uniform two-operator setting to the general case of multiple Grover operators sampled according to an arbitrary probability distribution. The key ingredient is an auxiliary operator that approximates the expected Grover evolution in the relevant subspace while preserving the essential dynamics of the uniform two-operator case. Using this construction, we establish the same asymptotic performance guarantee in the general setting.

    \item We show that a high success probability can be achieved using a highly biased sampling distribution. This is particularly advantageous when the implementation costs of the Grover operators differ significantly, as it allows the less expensive operators to be sampled more frequently.

    \item We prove that the proposed algorithm is asymptotically optimal: any deterministic sequence of Grover operators that achieves a comparable success probability must use the same asymptotic number of Grover iterations.
    
    \item We provide numerical simulations that verify the theoretical results.
\end{itemize}

The remainder of this paper is organized as follows. In \Cref{sec:pre}, we briefly review the standard Grover algorithm. \Cref{sec:random_grover} developes the randomized Grover framework. Specifically, \Cref{sec:two_grover} analyzes the two-operator and uniform sampling case, \Cref{sec:over_two_grover} extends the analysis to multiple operators with non-uniform sampling distribution; \Cref{sec:biased_distribution} studies biased sampling distributions, and \Cref{sec:optimality} establishes the asymptotic optimality of the proposed algorithm. Numerical results are presented in \Cref{sec:numeric}. Finally, \Cref{sec:conclusion} concludes the paper. Additional technical details and omitted calculations are deferred to the Appendix.

\section{Preliminaries}\label{sec:pre}
\subsection{Grover's algorithm}
Grover’s algorithm \cite{10.1145/237814.237866} is a quantum search procedure for finding a marked element in an unstructured database of size $N = 2^n$, providing a quadratic speedup over classical methods. Let $\mathcal{M} \subseteq \{0,\dots,N-1\}$ denote the set of marked items with $|\mathcal{M}| = r$. The algorithm initializes the uniform superposition
\[
\ket{\psi} = \frac{1}{\sqrt{N}} \sum_{x=0}^{N-1} \ket{x}.
\]
and applies the Grover operator $G = DO$, where $O$ is the phase oracle marking elements in $\mathcal{M}$ and $D = 2\ket{\psi}\bra{\psi} - I$ is the diffusion operator.

A standard fact is that the dynamics are confined to the two-dimensional subspace spanned by the normalized marked and unmarked states
\[
\ket{\alpha} = \frac{1}{\sqrt{r}} \sum_{x \in \mathcal{M}} \ket{x}, 
\quad
\ket{\beta} = \frac{1}{\sqrt{N-r}} \sum_{x \notin \mathcal{M}} \ket{x}.
\]
In this basis, the evolution reduces to a rotation by a fixed angle determined by $\sin\theta = \sqrt{r/N}$, so that after $t$ iterations the state is
\[
\ket{\psi_t} = \sin((2t+1)\theta)\ket{\alpha} + \cos((2t+1)\theta)\ket{\beta}.
\]
As a consequence, the success probability is maximized after $\mathcal{O}(\sqrt{N/r})$ iterations.

\subsection{Notations}\label{sec:notations}
Given a search space $\calS$ of size $N=2^n$, let $\calA_{0},\calA_{1},\dots,\calA_{k-1} \subset \calS$ be $k$ distinct subsets, and let $G_{0},G_{1},\dots,G_{k-1}$ denote the corresponding Grover operators. Define 
\[\ket{\calA_i} = \frac1{\sqrt{|\calA_i|}} \sum_{x \in \mathcal A_i} \ket{x}, \quad \text{and} \quad \ket{\calA_i^c} = \frac1{\sqrt{|\calA_{i}^c|}} \sum_{x \not \in \mathcal A_i} \ket{x},\quad i = 0, 1, \dots, k-1. \]
Then each $G_i$ acts as a rotation in the two-dimensional subspace spanned by $\ket{\calA_i}$ and $\ket{\calA_i^c}$. For any integer $j \in [0, 2^{k}-1]$ with binary representation $(j_{k-1}\cdots j_1 j_0)_2$, define 
\begin{equation}\label{eq:I}
    \calI_j := \left(\bigcap_{i:j_i = 0} \calA_{i}\right) \bigcap \left(\bigcap_{i:j_i = 1} \calA_{i}^c\right).
\end{equation}
The relation between $\calA_i$ and $\calI_j$ is illustrated by the $k=3$ case in \Cref{fig:placeholder}. The corresponding normalized quantum state is
\[\ket{\calI_j} := \frac{1}{\sqrt{|\calI_j|}}\sum_{x \in \calI_j} \ket{x}.\]
By construction, the sets $\calI_i$ and $\calI_j$ are disjoint for $i\not = j$, and hence $\braket{\calI_i}{\calI_j} = 0$. The uniform superposition state $\ket{\psi}$ can be represented as
\[\ket{\psi} = \frac{1}{\sqrt N} \sum_{i=0}^{N-1} \ket{i} = \frac{1}{\sqrt N} \sum_{j=0}^{2^k-1} \sqrt{|\calI_j|} \ket{\calI_j} = \sum_{j=0}^{2^k-1} \sqrt{\frac{|\calI_j|}{N}} \ket{\calI_j} =: \sum_{j=0}^{2^k-1} \alpha_{j} \ket{\calI_{j}},\]
where $\alpha_j = \sqrt{|\calI_j| / N}$.
\begin{figure}
    \centering
    \includegraphics[width=0.8\linewidth]{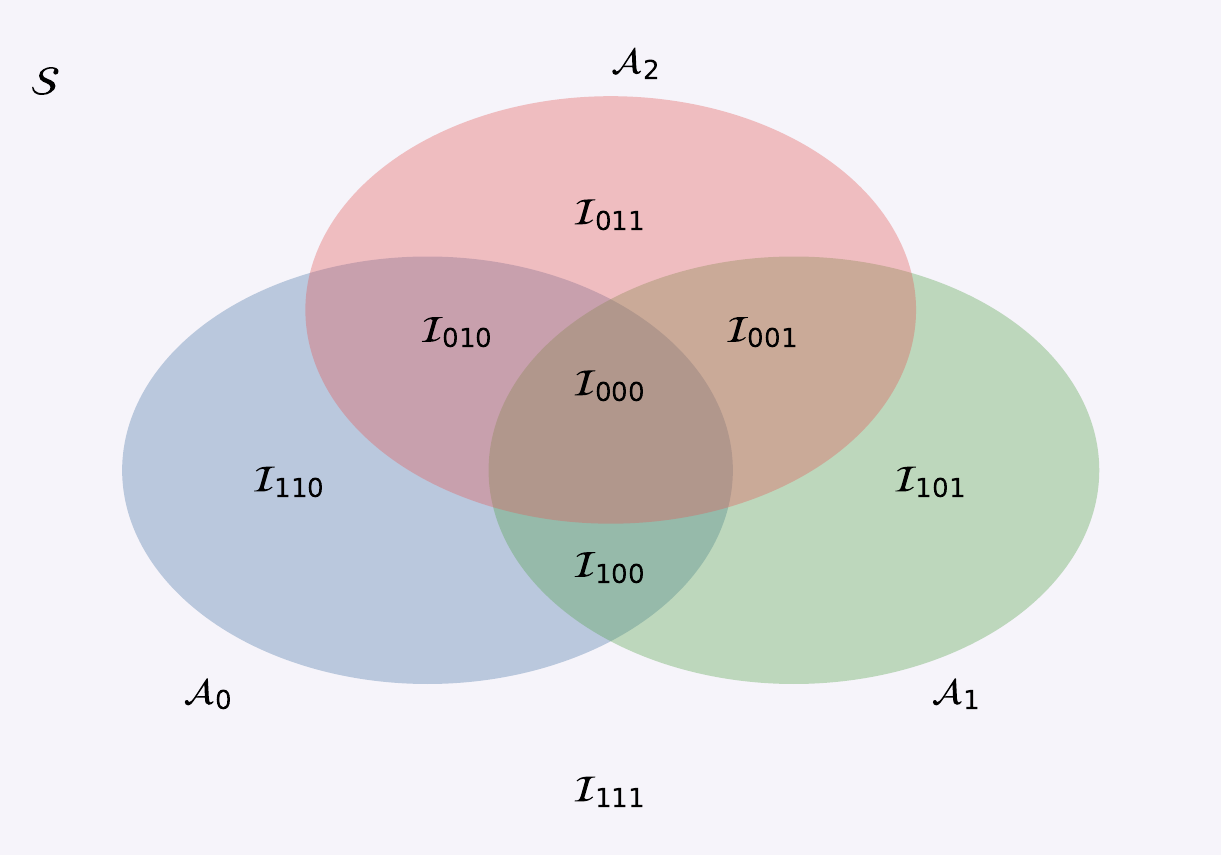}
    \caption{Partition of the search space into the regions
    $\mathcal I_j$ induced by
    $\mathcal A_0,\mathcal A_1,\mathcal A_2$.}
    \label{fig:placeholder}
\end{figure}

Note that $G_i = DO_i$, where 
\[D = 2\ket{\psi}\bra{\psi} - I, \quad \text{and} \quad O_i \ket{x} = \begin{cases} -\ket{x}, & x\in \calA_i,\\ \ket{x}, & x\in \calA_i^c.\end{cases}\]
For any operator $M \in \mathbb C^{2^n\times 2^n}$, let $[M]_\calI$ denote the restriction of $M$ to the subspace
\[\mathrm{span}\{\ket{\calI_j}: j = 0, \dots, 2^k - 1\},\]
namely 
\[[M]_\calI = \begin{bmatrix}
    \bra{\calI_0}M\ket{\calI_0} & \cdots & \bra{\calI_0}M\ket{\calI_{2^k-1}}\\
    \vdots & & \vdots \\
    \bra{\calI_{2^k-1}}M\ket{\calI_0} & \cdots & \bra{\calI_{2^k-1}}M\ket{\calI_{2^k-1}}
\end{bmatrix} \in \mathbb C^{2^k\times 2^k}.\]
Under this notation, the diffusion operator $D$ admits $[D]_\calI = 2\alpha \alpha^\top - I$, where $\alpha = [\alpha_0, \dots, \alpha_{2^k-1}]^\top$. Moreover, for every $i=0, \dots, k-1$ and every $j = 0, \dots, 2^k-1$, the set $\calI_j$ is either entirely contained in $\calA_i$ or entirely contained in its complement $\calA_i^c$. Consequently, 
\begin{equation}
    O_i \ket{\calI_j} = \begin{cases}-\ket{\calI_j}, & \calI_j \subset \calA_i,\\
\ket{\calI_j}, & \calI_j \subset \calA_i^c.\end{cases}
\end{equation}
Moreover, according to the definition of $\calI_j$, we have
\begin{equation}
    \label{eq:OiIj}
    O_i \ket{\calI_j} = (-1)^{j_i + 1} \ket{\calI_j},
\end{equation}
so $[O_i]_\calI$ is diagonal, whose $j$-th diagonal entry is $(-1)^{j_i+1}$, where $j_i$ denotes the $i$-th bit in the binary representation of $j$.

Moreover, we denote $r := |\cap_{i=0}^{k-1}\calA_i|$ and $m := |\cap_{i=0}^{k-1}\calA_i|$ for notation simplification in this paper.

\section{Random Grover Search}\label{sec:random_grover}
In this section, we sequentially apply randomly chosen operators from $\{G_0, G_1, \dots, G_{k-1}\}$ to $\ket{\psi}$, and denote by $\ket{\psi(t)}$ the state after $t$ applications. Since the subspace spanned by $\left\{\ket{\calI_j}: j = 0, \dots, 2^k -1\right\}$ is invariant under each $G_i$, there exists real coefficients $a_j (t)$ such that
\[\ket{\psi(t)} = \sum_{j=0}^{2^k-1} a_j(t) \ket{\calI_j}.\]
Note that the coefficients $a_j(t)$ are random variables, where the randomness arises from the random choice of Grover operators. For any fixed $t$, there are only finitely many possible Grover sequences, and hence each $a_j(t)$ takes values from a finite set $V_j(t)$. Measuring $\ket{\psi(t)}$ in the computational basis yields an outcome $x\in \calI_0 = \cap_i \calA_i$ with probability 
\[\begin{aligned}
    \psucc(t) &= \sum_{s\in V_0(t)} \mathbb P\left(x\in \calI_0 \big | a_0(t) = s\right) \mathbb P\left(a_0(t) = s\right) \\
    & = \sum_{s\in V_0(t)} s^2 \cdot \mathbb P\left(a_0(t) = s\right) = \mathbb E\left[a_0^2(t)\right].
\end{aligned}\]
Here, the success probability incorporates both the randomness of the choice of Grover operators and the randomness of the final measurement. The above identity shows that $\psucc(t)$ is equal to the expectation of $a_0^2(t)$, where the randomness comes solely from the choice of the Grover sequence.

\subsection{Two Grover operators}\label{sec:two_grover}
First, we consider the simplest case, where $k = 2$ and $G_0, G_1$ are selected with equal probability at each iteration. Restricting to the subspace spanned by $\left\{\ket{\calI_0}, \ket{\calI_1}, \ket{\calI_2}, \ket{\calI_3}\right\}$, the oracle operators take the form
\begin{equation}\label{eq:O01}
    [O_0]_\calI = \begin{bmatrix}-1\\ & -1 \\ && 1 \\ &&& 1 \end{bmatrix} , \quad \quad [O_1]_\calI = \begin{bmatrix}-1 \\ & 1 \\ && -1 \\ &&& 1\end{bmatrix}.
\end{equation}
The main result for this setting is stated below. 

\begin{theorem}\label{the:1}
Consider two distinct Grover operators \(G_0\) and \(G_1\), and let the initial state be the uniform superposition state $\ket{\psi}$. At each step, independently apply either \(G_0\) or \(G_1\) with equal probability. Assume $N > (m+r)^2 / r$, then after
\[
T = \Theta\left(\frac{\pi}{4}\sqrt{\frac{N}{r}}\right)
\]
steps, measuring the resulting quantum state yields an element in \(\calA_0\cap\calA_1\) with probability
\[
\mathbb P_{\mathrm{succ}}(T)
=
\mathbb E\left[a_0^2(T)\right]
\geq
1-\left(
\frac{2(m-r)}{\sqrt{Nr}}
+
\frac{4r}{N}
\right).
\]
\end{theorem}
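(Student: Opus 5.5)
The plan is to replace the random dynamics by a deterministic linear evolution, and then to solve that evolution exactly in a three–dimensional subspace. I read $m$ in the statement as $|\calA_0\cup\calA_1|$; the paper's definition repeats the one for $r$, apparently a typo. Under this reading $\alpha_1^2+\alpha_2^2=(m-r)/N$.

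\textbf{Step 1: reduce to the mean state.} Since $\psucc(T)=\mathbb E[a_0^2(T)]\ge (\mathbb E[a_0(T)])^2$ by Jensen, it suffices to control the mean amplitude. The choices at different steps are independent, and each step acts linearly. Hence
\[
\mathbb E\ket{\psi(T)}=\bar G^{T}\ket{\psi},\qquad \bar G:=\tfrac12(G_0+G_1)=D\bar O,\qquad [\bar O]_\calI=\tfrac12\big([O_0]_\calI+[O_1]_\calI\big)=\mathrm{diag}(-1,0,0,1),
\]
using \eqref{eq:O01}. It will then suffice to show $\mathbb E[a_0(T)]\ge 1-\tfrac{m-r}{\sqrt{Nr}}-\tfrac{2r}{N}$, since squaring gives the claimed bound.

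\textbf{Step 2: a three-dimensional invariant subspace.} Set $\beta:=\sqrt{\alpha_1^2+\alpha_2^2}=\sqrt{(m-r)/N}$ and $\ket v:=(\alpha_1\ket{\calI_1}+\alpha_2\ket{\calI_2})/\beta$. Then $\ket\psi=\alpha_0\ket{\calI_0}+\beta\ket v+\alpha_3\ket{\calI_3}$ with $\alpha_0=\sqrt{r/N}$. The operator $\bar O$ maps $\ket{\calI_0}\mapsto-\ket{\calI_0}$, $\ket v\mapsto 0$ and $\ket{\calI_3}\mapsto\ket{\calI_3}$, and $D$ preserves $\mathrm{span}\{\ket{\calI_0},\ket v,\ket{\calI_3}\}$ because it contains $\ket\psi$. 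So $\bar G$ acts on this span as an explicit $3\times3$ real matrix $M=(2aa^\top-I)\,\mathrm{diag}(-1,0,1)$, where $a=(\alpha_0,\beta,\alpha_3)^\top$. Note that $M$ has rank $2$, so one eigenvalue is $0$; in fact $\ker M=\mathrm{span}\{\ket v\}$, and one step of $\bar G$ discards the $\ket v$-component of the state. On the remaining directions the characteristic polynomial reduces to $\lambda^2-\mathrm{tr}(M)\lambda+c$ with $\mathrm{tr}(M)=2\alpha_3^2-2\alpha_0^2$ and $c=\alpha_3^2-\alpha_0^2$ (the sum of the principal $2\times 2$ minors of $M$). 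The hypothesis $N>(m+r)^2/r$ should be exactly what forces the discriminant to be negative, so that the nonzero eigenvalues are a complex pair $\rho e^{\pm i\phi}$. Here $\rho^2=\alpha_3^2-\alpha_0^2=1-(m+r)/N$ and $\cos\phi=\sqrt{\alpha_3^2-\alpha_0^2}$, so $\sin\phi=\sqrt{(m+r)/N}$ and $\phi=\Theta(\sqrt{r/N})$ when $m$ is comparable to $r$. When $\beta=0$ one recovers the standard Grover rotation by $2\theta$.

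\textbf{Step 3: closed form and choice of $T$.} Write $\ket\psi$ in the eigenbasis and obtain
\[
\mathbb E[a_0(T)]=\rho^{T}\big(A\sin(T\phi+\varphi_0)\big)
\]
for explicit constants $A,\varphi_0$ depending on $\alpha_0,\beta,\alpha_3$. The kernel component contributes nothing for $T\ge1$. Choose $T$ as the integer nearest to $(\pi/2-\varphi_0)/\phi$, which is $\Theta(\tfrac\pi4\sqrt{N/r})$. The loss from rounding $T$ is $O(\phi^2)=O(r/N)$. The damping $\rho^T=(1-(m+r)/N)^{T/2}$ costs $1-\rho^T\lesssim T(m+r)/(2N)$, which should resolve to a term of order $(m-r)/\sqrt{Nr}$ plus an $O(r/N)$ part. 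The amplitude $A$ differs from $1$ by $O(\beta^2/\alpha_0\cdot\sqrt{\cdot})$-type corrections.

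\textbf{Main obstacle.} The expected difficulty is Step 3's bookkeeping. One must compute $A$ and $\varphi_0$ exactly enough, and expand $\rho^T$ sharply, so that the accumulated losses come out as $\tfrac{m-r}{\sqrt{Nr}}+\tfrac{2r}{N}$ rather than something weaker. A naive telescoping comparison with standard Grover fails, because it produces an error of order $\sqrt{(m-r)/r}$; this is why the exact $3\times3$ spectral solution is needed. I would first verify the discriminant condition against $N>(m+r)^2/r$, since that pins down the correct form of $\rho$ and $\phi$.
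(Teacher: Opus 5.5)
\textbf{Gap: the spectral data in Step 2 are wrong, and Step 3 cannot be completed with them.}

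Your overall route is the paper's. You pass by Jensen to $\mathbb E[a_0(T)]=\bra{\calI_0}\bar G^{T}\ket{\psi}$. You use the fact that $[\bar O]_\calI=\mathrm{diag}(-1,0,0,1)$ annihilates the $\ket{\calI_1},\ket{\calI_2}$ directions; your $\ket v\in\ker M$ is the paper's observation that the middle columns of $[G]_\calI$ vanish, so only the $2\times2$ block on $\{\ket{\calI_0},\ket{\calI_3}\}$ matters. You then solve that block spectrally and take $T$ at the first peak of a damped sinusoid. Your reading of $m$ as $|\calA_0\cup\calA_1|$ is also correct.

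The error is in the constant coefficient of the characteristic polynomial. Because the middle column of $M$ is zero, the only nonzero principal $2\times 2$ minor is the $\{0,2\}$ one:
\[
c=\det\begin{bmatrix}1-2\alpha_0^2 & 2\alpha_0\alpha_3\\ -2\alpha_0\alpha_3 & 2\alpha_3^2-1\end{bmatrix}=2(\alpha_0^2+\alpha_3^2)-1=1-\frac{2(m-r)}{N},
\]
not $\alpha_3^2-\alpha_0^2=1-\frac{m+r}{N}$. The corrected quantities are:
\begin{itemize}
\item $\rho^2=1-2(m-r)/N$.
\item The discriminant is $4\bigl[(1-\alpha_0^2-\alpha_3^2)^2-4\alpha_0^2\alpha_3^2\bigr]$. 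It is negative iff $(\alpha_0+\alpha_3)^2>1$, and this is where $N>(m+r)^2/r$ enters. Your polynomial has discriminant $4c(c-1)$, negative for any $0<c<1$ regardless of the hypothesis, which was a warning sign.
\item $\rho\sin\phi=\Gamma:=\sqrt{4\alpha_0^2\alpha_3^2-(1-\alpha_0^2-\alpha_3^2)^2}$, so $\sin\phi\approx 2\sqrt{r/N}$, not $\sqrt{(m+r)/N}$.
\end{itemize}

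This is not cosmetic. With your $\rho$, the damping costs $1-\rho^T\approx T(m+r)/(2N)$. For $T\sim\sqrt{N/r}$ this contains a term of order $\sqrt{r/N}$ even when $m=r$, far exceeding the $O(r/N)$ budget in the theorem. So the bookkeeping you describe cannot produce the stated bound. Your own sanity check catches it: for $\beta=0$ one has $\alpha_0^2+\alpha_3^2=1$ and the block is an exact rotation, yet your formula gives $\rho^2=1-2\alpha_0^2<1$.

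\textbf{How Step 3 goes once corrected.} It is then what the paper does, and it is short.
\begin{itemize}
\item The mean amplitude is $\mathbb E[a_0(t)]=\frac{2\alpha_0\alpha_3}{\Gamma}\rho^{t}\sin(t\phi+\varphi_0)$ with $\sin\varphi_0=\Gamma/(2\alpha_3)$.
\item Since $\Gamma\le 2\alpha_0\alpha_3$, your prefactor $A$ is at least $1$, so there is no amplitude correction to control.
\item Taking $T=\lfloor(\pi/2-\varphi_0)/\phi\rfloor$ gives $\sin(T\phi+\varphi_0)\ge\cos\phi$.
\item Bernoulli's inequality gives $\rho^T\ge 1-\frac{m-r}{N}T$.
\item The hypothesis gives $\sqrt{3r/N}\le\sin\phi\le 2\sqrt{r/N}$. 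With $T\le\pi/(2\sin\phi)$, this turns $1-\mathbb E[a_0(T)]^2\le\sin^2\phi+\frac{2(m-r)}{N}T$ into the stated $\frac{4r}{N}+\frac{2(m-r)}{\sqrt{Nr}}$.
\end{itemize}
For $T=\Theta(\frac\pi4\sqrt{N/r})$ you also need $\varphi_0$ small relative to $\phi$, which your sketch does not address. The paper gets $\varphi_0\le\phi/2$ from $\sin 2\varphi_0\le\sin\phi$.
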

\begin{proof} 
By the Cauchy inequality, we have
\begin{equation}
    \mathbb{E}\left[a_0^2(t)\right] \geq \left(\mathbb{E}\left[a_0(t)\right]\right)^2.
\end{equation}
Since the random choices are independent, it follows that
\begin{equation}
    \begin{aligned}
        \mathbb E\left[a_{0}(t)\right] &=\mathbb{E}\!\left[\bra{\calI_{0}}G_{i_t}G_{i_{t-1}} \cdots G_{i_1}\ket{\psi}\right] = \bra{\calI_{0}}\mathbb{E}\left[G_{i_t}G_{i_{t-1}} \cdots G_{i_1}\right]\ket{\psi} \\
        &= \bra{\calI_{0}}\mathbb{E}\left[G_{i_t}\right]\mathbb{E}\left[G_{i_{t-1}}\right]\cdots\mathbb{E}\left[G_{i_1}\right]\ket{\psi} = \bra{\calI_{0}}\mathbb{E}\left[G_{i_{1}}\right]^{t}\ket{\psi} \\
        &= \bra{\calI_{0}}\left(\frac{G_{0}+G_{1}}{2}\right)^{t}\ket{\psi}.
    \end{aligned}
\end{equation}
For convenience, we define 
\begin{equation}\label{eq:G}
    G:= (G_0 + G_1) / 2.
\end{equation} 
According to \cref{eq:O01}, in the subspace spanned by $\{\ket{\calI_0}, \ket{\calI_1}, \ket{\calI_2}, \ket{\calI_3}\}$, the operator $G$ takes the form 
\[
   [G]_\calI = (2\alpha \alpha^\top - I)\left(\frac{[O_0]_\calI + [O_1]_\calI}{2}\right)= \begin{bmatrix}
    1-2\alpha_{0}^{2} & 0 & 0 & 2\alpha_{0}\alpha_{3}  \\
    -2\alpha_{0}\alpha_{1} & 0 & 0 & 2\alpha_{1}\alpha_{3}  \\
    -2\alpha_{0}\alpha_{2} & 0 & 0 & 2\alpha_{2}\alpha_{3}  \\
    -2\alpha_{0}\alpha_{3} & 0 & 0 & 2\alpha_{3}^{2}-1 
    \end{bmatrix}.
\]
Next, suppose
\[G^t \ket{\psi} = \sum_{j=0}^3 \widehat a_j(t) \ket{\calI_j}.\]
Then the coefficients satisfy the recursion

\begin{equation} \label{equ: M_evolution}
\begin{bmatrix}
\widehat a_{0}(t) \\\widehat a_{3}(t)
\end{bmatrix}=M \begin{bmatrix}
\widehat a_{0}(t-1) \\
\widehat a_{3}(t-1)
\end{bmatrix},\quad \text{where} \quad 
M = \begin{bmatrix}
1 - 2 \alpha_0^2 & 2\alpha_0 \alpha_3 \\
-2 \alpha_0 \alpha_3 & 2\alpha_3^2 -1
\end{bmatrix}, \; \begin{bmatrix}
    \widehat a_0(0)\\ \widehat a_3(0)
\end{bmatrix} = \begin{bmatrix} \alpha_0\\ \alpha_3\end{bmatrix}.
\end{equation}
Here, $\alpha_0 = \sqrt{r/N}$ and $\alpha_3 = \sqrt{1-m/N}$.
Applying $M^t$ to the initial vector $[\alpha_0, \alpha_3]^\top$ and extracting the first component yields
\begin{equation}\label{eq:hat_a_0_t}
    \widehat a_0(t)
    = \frac{\alpha_0(1-\alpha_0^{2}+\alpha_3^2 + \Delta)(\alpha_3^2 - \alpha_0^2 + \Delta)^t 
    + \alpha_0(\alpha_0^{2}-\alpha_3^2 -1 + \Delta)(\alpha_3^2 - \alpha_0^2 - \Delta)^t}{2\Delta},
\end{equation}
where $\Delta = \sqrt{(1- \alpha_0^2 - \alpha_3^2)^2 - 4\alpha_0^2 \alpha_3^2}$. The derivation of the above expression is deferred to \Cref{app:hat_a_0_t}. Since $N > (m+r)^2 / r$, we have  
\[(\alpha_0 + \alpha_3)^2 = \frac{N - m + r + 2\sqrt{r(N-m)}}{N} > \frac{N - m + r + 2\sqrt{m^2 + mr + r^2}}{N} > 1, \]
and thus $(1- \alpha_0^2 - \alpha_3^2)^2 - 4\alpha_0^2 \alpha_3^2 < 0$. Therefore, we write $\Delta = \imath \Gamma$, where $\Gamma = \sqrt{4\alpha_0^2 \alpha_3^2 - (1- \alpha_0^2 - \alpha_3^2)^2}$, and the expression for $\widehat a_0(t)$ can be rewritten as
\[\begin{aligned}
    \widehat a_0(t) &= \frac{2\alpha_0\alpha_3 e^{\imath \beta} \rho^t e^{\imath \theta t} - 2\alpha_0\alpha_3 e^{-\imath \beta}\rho^t e^{-\imath \theta t} }{2\imath \Gamma} =  \frac{2\alpha_0\alpha_3\rho^t\sin(t\theta+\beta)}{\Gamma},
\end{aligned}\]
where 
\[\rho = \sqrt{2(\alpha_0^2 + \alpha_3^2)-1}, \quad \sin \theta = \frac{\Gamma}{\rho}, \quad \sin \beta = \frac{\Gamma}{2\alpha_3}, \quad \theta, \beta \in \left[0, \frac \pi 2\right].\]

Let 
\[T = \left\lfloor \frac{\pi}{2\theta} - \frac{\beta}{\theta} \right\rfloor > \frac{\pi}{2\theta} - \frac{\beta}{\theta} - 1.\]
Then we have $T \theta + \beta \in (\pi/2 - \theta, \pi/2]$, and thus $\sin(T \theta + \beta) \ge \sin (\pi / 2 - \theta) = \cos\theta$. Moreover, noting that $\Gamma \leq 2\alpha_0\alpha_3$, we obtain
\begin{equation}
\widehat a_{0}(T_{0}) \geq \rho^{T_{0}}\cos\theta
=  \cos\theta \left(1 - 2(1 - \alpha_0^2 - \alpha_3^2)\right)^{\frac{T_{0}}{2}} \geq 
\cos\theta \left(1 - (1 - \alpha_0^2 - \alpha_3^2)T_{0}\right),
\label{equ:T0}
\end{equation}
where the last inequality follows from Bernoulli's inequality.

Consequently,
\begin{equation}\label{eq:exp_succ}
\begin{aligned}
    0 &\le 1 - \mathbb{E}\left[a_0^2(t)\right] \le 1 - \left(\mathbb E\left[a_0(t)\right]\right)^2 = 1 - \widehat a_0^2(t)\\
    &\le 1 - \cos^2\theta \left(1 - (1 -\alpha_0^2 - \alpha_3^2)T\right)^2\\
    &\le \sin^{2}\theta + 2 \cos^{2}\theta (1 - \alpha_0^2 - \alpha_3^2)T_{0} \\
    &\le \sin^2 \theta + 2\cdot \frac{m-r}{N} \cdot \frac{\pi}{2\theta} \le \sin^2 \theta + \frac{m-r}{N} \cdot \frac{\pi}{\sin \theta}.
\end{aligned}
\end{equation}
Here, 
\[\sin \theta = \frac{\Gamma}{\rho} = \sqrt{\frac{\frac{4r(N-m) - (m-r)^2}{N^2}}{1- \frac{2(m-r)}{N}}}
 = 2\sqrt{\frac rN} \sqrt{\frac{1-\left(\frac{m+r}{2}\right)^2 \frac{1}{N r}}{1 - 2\frac{m-r}{N}}} \le 2\sqrt{\frac rN}.\]
Moreover, since $N > (m+r)^2 / r$, it follows that
\[ \frac{1}{\sin \theta} \le \frac12 \sqrt{\frac Nr} \sqrt{\frac 43} = \frac1{\sqrt 3} \sqrt{\frac Nr}. \]
Substituting these bound into \cref{eq:exp_succ}, we obtain 
\[0 \le 1 - \mathbb E\left[a_0^2(t)\right] \le \frac{2(m-r)}{\sqrt{Nr}} + \frac{4 r}{N}.\]
For the oracle complexity, consider the function
\[
f(r) := \frac{\sin2\beta}{\sin\theta} = \frac{(1+\alpha_3^2-\alpha_0^2)\sqrt{2(\alpha_0^2 + \alpha_3^2)-1}}{2\alpha_3^2}
= \frac{(2-\frac{m+r}{N})\sqrt{1-\frac{2(m-r)}{N}}}{2-\frac{2m}{N}},
\]
for $r \in (0,m]$. Using the assumption $N > (m+r)^2 / r$, it follows that
\[
f'(r) = \frac{N+m-3r}{2N(N-m)\sqrt{1-\frac{2(m-r)}{N}}} \geq \frac{3m - 2r}{2N^2} > 0, \qquad \forall r \in (0,m].
\]
Hence $f(r) \leq f(m) = 1$. Then we get $2\beta \leq \theta$, and thus $0< \beta / \theta \leq 1/2$. Then, we have
\[
\begin{aligned}
    T_{0} \geq \frac{\pi}{2\theta} - 2 \geq \frac{\pi}{2(\sin\theta + \frac{\theta^3}{6})} -2 \geq \frac{\pi}{2(\sin\theta + \frac{1}{6}(\frac{\pi}{2}\sin\theta)^3)} -2 \geq \frac{\pi}{4}\sqrt{\frac{N}{r}}\frac{N}{N+4r} -2.
\end{aligned}
\]
On the other hand, we have
\[
\begin{aligned}
    T \leq \frac{\pi}{2\theta} \leq \frac{\pi}{2\sin\theta} 
    \leq \frac{\pi}{4}\sqrt{\frac{N}{r}}\frac{1}{\sqrt{1-\frac{(m+r)^2}{4Nr}}}.
\end{aligned}
\]
Finally, we conclude that $T = \Theta(\pi/4\sqrt{N/r})$.
\end{proof}

\begin{remark}
The operator \(M\) can be written as
\[
M =
\left(
I -
2
\begin{bmatrix}
\alpha_0\\
\alpha_3
\end{bmatrix}
\begin{bmatrix}
\alpha_0 & \alpha_3
\end{bmatrix}
\right)
\begin{bmatrix}
1 & 0\\
0 & -1
\end{bmatrix},
\]
which closely resembles the Grover operator. However, since $\alpha_0^2+\alpha_3^2<1$, this operator is not a perfect rotation and instead leads to leakage of amplitude outside the two-dimensional subspace during the evolution. \Cref{fig:fig1_3img} shows the trajectory of \([\widehat a_0(t),\widehat a_3(t)]^\top\) in the two-dimensional plane under different parameters.
\begin{figure}[!ht]
    \centering
    \includegraphics[width=0.9\textwidth]{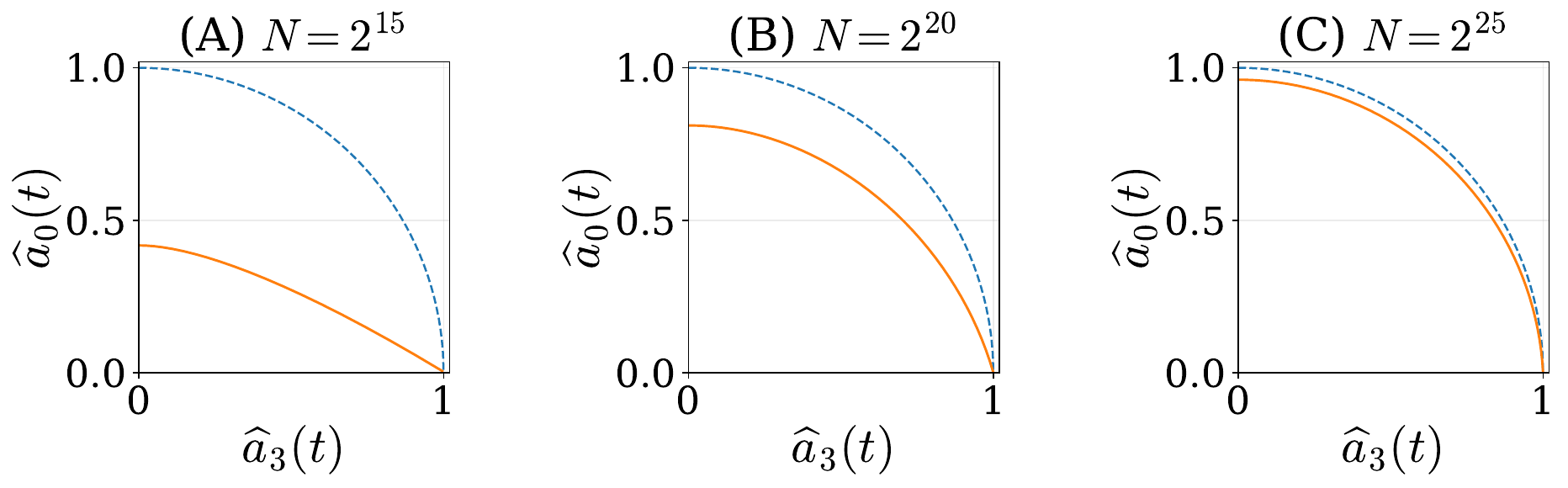}
    \caption{Projected trajectories onto the \((\widehat a_3(t),\widehat a_0(t))\)-plane under the averaged evolution matrix \(M\) for different values of \(N\), with fixed \(r=1\), \(|\mathcal A_0|=200\), and \(|\mathcal A_1|=100\). 
    The dashed curves represent quarter circles of radius \(1\), while the solid curves denote the projected trajectories.}
    \label{fig:fig1_3img}
\end{figure}
\end{remark}

\subsection{More than two Grover operators}\label{sec:over_two_grover}
In this section, we extend the analysis to the case of more than two Grover operators. Let $\{G_i:i = 0, \dots, k-1\}$ be selected independently at each step with probability $\{p_i\}_{i=0}^{k-1}$. As discussed eariler, 
\[\mathbb E[a_0(t)] = \bra{\calI_0} \left(\sum_{i=0}^{k-1} p_i G_i\right)^t \ket{\psi}.\]
Using the notations introduced in \Cref{sec:notations}, we define 
\[G := \sum_{i=0}^{k-1} p_iG_i = D \sum_{i=0}^{k-1} p_iO_i.\]
Restricted to the subspace 
\[\mathrm{span} \left\{\ket{\calI_j}: j = 0, \dots, 2^k-1\right\},\]
we have $[D]_\calI = 2\alpha\alpha^\top - I$ and 
\[\left[\sum_{i=0}^{k-1} p_iO_i\right]_\calI = \sum_{i=0}^{k-1} p_i \left[O_i\right]_\calI.\]
According to \cref{eq:OiIj}, each $[O_i]_\calI$ is diagonal, with $j$-th diagonal entry equal to $(-1)^{1 + j_i}$. Consequently, $\sum_{i=0}^{k-1} p_i \left[O_i\right]_\calI$ is also diagonal, whose $j$-th diagonal entry is 
\begin{equation}\label{eq:O_jj}
    \sum_{i=0}^{k-1} p_i (-1)^{1 + j_i} = \sum_{i:j_i=1} p_i - \sum_{i:j_i=0} p_i = 1 - 2\sum_{i:j_i=0} p_i,
\end{equation}
where we use the identity $\sum_{i=0}^{k-1} p_i = 1$.

Unlike the two-operator case with uniform sampling, the corresponding average evolution matrix \(G\) is no longer sparse in the relevant subspace, which makes its spectral analysis and the characterization of its powers considerably more challenging. To address this difficulty, we construct a sparse approximation $\widetilde{G}$ by discarding entries whose contributions are asymptotically negligible. The following lemma provides the construction of $\widetilde{G}$.
\begin{lemma}\label{lemma:1}
Consider $k\ge 2$ distinct Grover operators $G_0, G_1, \dots, G_{k-1}$. Let
\[G = \sum_{i=0}^{k-1} p_i G_i, \qquad \sum_{i=0}^{k-1}p_i = 1, \qquad p_i > 0,\]
define $\ti{\alpha} := [\alpha_0, 0, \dots, 0, \alpha_{2^k-1}]^\top$, and let $\ti{G}$ be an operator such that the subspace
\[\mathrm{span}\left\{\ket{\calI_j}: j = 0, \dots, 2^k - 1\right\}\]
is invariant under $\ti{G}$ and 
\[[\ti{G}]_\calI := \left(2 \alpha \ti{\alpha}^\top - I\right) \sum_{i=0}^{k-1} p_i [O_i]_\calI. \]
Then, for all $n\in \mathbb N$, 
\begin{equation}\label{equ:1}
    \left\|\left[G^n\right]_\calI - [\tilde{G}^n]_\calI\right\|_2 \leq \frac 1p \sqrt{1 - \alpha_0^2 - \alpha_{2^{k} - 1}^2} + \frac{2n}p (1 - \alpha_0^2 - \alpha_{2^{k} - 1}^2), 
\end{equation}
where 
\[p:= \min_{0\le i \le k-1} p_i.\]
\end{lemma}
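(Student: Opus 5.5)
The plan is to compare $[G^n]_\calI$ and $[\ti G^n]_\calI$ through a telescoping identity, after first computing $\ti G$ in block form. Throughout, write $\varepsilon := 1-\alpha_0^2-\alpha_{2^k-1}^2$ and $\Lambda := \sum_i p_i[O_i]_\calI = \mathrm{diag}(d_j)$. Also set $\beta := \alpha-\ti\alpha$, which is supported on the "middle" indices and has $\|\beta\|_2^2=\varepsilon$.

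\textbf{Step 1: facts about $\Lambda$ and the difference $E$.} By \cref{eq:O_jj}, $d_0=-1$ and $d_{2^k-1}=1$. For every middle index $j\notin\{0,2^k-1\}$, the bit string $j$ has at least one $0$ and at least one $1$, so $|d_j|\le 1-2p$. The difference of the two operators is
\[
E:=[G]_\calI-[\ti G]_\calI = 2\alpha\beta^\top\Lambda .
\]
Moreover $[G]_\calI$ is a contraction: $2\alpha\alpha^\top-I$ is a reflection because $\|\alpha\|=1$, and $\|\Lambda\|\le1$.

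\textbf{Step 2: block structure of $\ti G$.} Order the coordinates as $S=\{0,2^k-1\}$ followed by the middle set $R$. Because $\ti\alpha$ vanishes on $R$, we get
\[
[\ti G]_\calI=\begin{bmatrix} A & 0\\ C & -\Lambda_R\end{bmatrix},\qquad A=(2\alpha_S\alpha_S^\top-I)\Lambda_S,\quad C=2\alpha_R\alpha_S^\top\Lambda_S .
\]
Since $\|\alpha_S\|\le1$, the matrix $2\alpha_S\alpha_S^\top-I$ has eigenvalues $2\|\alpha_S\|^2-1$ and $-1$, so $A$ is a contraction. Also $\|C\|\le2\sqrt\varepsilon$ and $\|\Lambda_R\|\le1-2p$. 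Powers of a block lower-triangular matrix then give
\[
[\ti G^s]_\calI=\begin{bmatrix} A^s & 0\\ \sum_{l=0}^{s-1}(-\Lambda_R)^{s-1-l}CA^l & (-\Lambda_R)^s\end{bmatrix}.
\]
The lower-left block is bounded by the geometric sum $2\sqrt\varepsilon\sum_{l\ge0}(1-2p)^l\le\sqrt\varepsilon/p$, uniformly in $s$.

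\textbf{Step 3: telescoping.} Use
\[
[G^n]_\calI-[\ti G^n]_\calI=\sum_{s=0}^{n-1}[G]_\calI^{\,n-1-s}\,E\,[\ti G]_\calI^{\,s}.
\]
Since $\|[G]_\calI\|\le1$ and $\|\alpha\|=1$, each term has norm at most $2\|\beta^\top\Lambda[\ti G^s]_\calI\|$. The row vector $\beta^\top\Lambda=\beta_R^\top\Lambda_R$ only sees the $R$-rows of $[\ti G^s]_\calI$, so it splits into two pieces:
\begin{itemize}
\item The $R$-part is $\beta_R^\top\Lambda_R(-\Lambda_R)^s$, with norm at most $\sqrt\varepsilon(1-2p)^{s+1}$.
\item The $S$-part is $\beta_R^\top\Lambda_R\sum_l(-\Lambda_R)^{s-1-l}CA^l$, with norm at most $\sqrt\varepsilon\cdot\sqrt\varepsilon/p=\varepsilon/p$ by Step 2.
\end{itemize}
Summing over $s$, the first piece contributes $2\sqrt\varepsilon\sum_{s\ge0}(1-2p)^{s+1}\le\sqrt\varepsilon/p$. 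The second contributes $2n\varepsilon/p$. Together these give exactly \cref{equ:1}.

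\textbf{Main obstacle.} The crude bound $\|E\|\le2\sqrt\varepsilon$ only yields $2n\sqrt\varepsilon$, which is too weak. The essential point is to exploit two facts: $\beta$ lives on the strictly damped coordinates where $|d_j|\le1-2p$, and $\ti G$ feeds into those coordinates only through $C=O(\sqrt\varepsilon)$. This is what converts the $n$-fold sum into a geometric series plus an $O(n\varepsilon)$ term. I expect the bookkeeping of Step 2's lower-left block, where the $1/p$ arises, to be the delicate part.
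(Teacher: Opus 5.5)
Your proof is correct and is essentially the paper's argument: the same telescoping sum $\sum_s [G]_\calI^{n-1-s}E[\ti G]_\calI^{s}$, the same rank-one reduction of each term to $2\|\beta^\top\Lambda[\ti G]_\calI^{s}\|_2$, and the same split into a middle part damped like $(1-2p)^{s+1}\sqrt{\varepsilon}$ plus a two-dimensional part bounded uniformly by $\varepsilon/p$. The only difference is presentational: you get the two pieces from the closed-form block lower-triangular expression for $[\ti G]_\calI^{s}$, whereas the paper derives them from the equivalent componentwise recurrence for $([\ti G]_\calI^{\top})^{l}\beta$.
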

\begin{proof}
First, observe that
\[ [G]_\calI - [\ti{G}]_\calI = 2\alpha (\alpha - \ti{\alpha})^\top \sum_{i=0}^{k-1}p_i [O_i]_\calI := 2\alpha\beta^{\top}.\]
Note that $\sum_{i=0}^{k-1}p_i [O_i]_\calI$ is diagonal and its $j$-th diagonal entry is given by \cref{eq:O_jj}, we have 
\[ \beta_j = \begin{cases}
    0, & j = 0, 2^k-1,\\
    \alpha_j \left(1 - 2\sum_{i:j_i = 0} p_i\right), & j = 1, \dots, 2^k - 2.
\end{cases}\]
Since $\|[G]_\calI\|_2 \le \|G\|_2 = 1$, we have
\[\left\|[G]_\calI^{n} - [\ti{G}]_\calI^{n}\right\|_2 = \left\|\sum_{l=0}^{n-1} [G]_\calI^{n-1-l}([G]_\calI-[\ti{G}]_\calI)[\ti{G}]_\calI^{l}\right\|_2
\leq \sum_{l=0}^{n-1} \left\|([G]_\calI-[\ti{G}]_\calI)[\ti{G}]_\calI^{l}\right\|_2.
\]
Moreover,
\[\left\|([G]_\calI-[\ti{G}]_\calI)[\ti{G}]_\calI^{l}\right\|_2 = 2\left\|\alpha\beta^{\top}[\ti{G}]_\calI^{l}\right\|_2 = 2\|\alpha\|_2 \left\|\left([\ti{G}]_\calI^{\top}\right)^l\beta\right\|_2 = 2\left\|\left([\ti{G}]_\calI^{\top}\right)^l\beta\right\|_2.
\]

Denote
\[
\left([\ti{G}]_\calI^{\top}\right)^{l}\beta
=
\left[b_{0}(l), b_{1}(l), \dots, b_{2^{k}-1}(l)\right]^{\top}.
\]
By writing out the expression of $[\ti{G}]_\calI$ explicitly, we obtain the following recurrence relation:
\begin{equation}\label{eq:rec_b}
    \small 
    \begin{bmatrix}
        b_0(l)\\
        b_1(l)\\
        \vdots\\
        b_{2^k-2}(l)\\
        b_{2^k-1}(l)
        \end{bmatrix}
        =
        \begin{bmatrix}
        1-2\alpha_0^2 & -2\alpha_0\alpha_1 & \cdots & -2\alpha_0\alpha_{2^k-2} & -2\alpha_0\alpha_{2^k-1}\\
        0 & \gamma_1 &  & 0 & 0\\
        \vdots &  & \ddots &  & \vdots\\
        0 & 0 &  & \gamma_{2^k-2} & 0\\
        2\alpha_{2^k-1}\alpha_0 & 2\alpha_{2^k-1}\alpha_1 & \cdots & 2\alpha_{2^k-1}\alpha_{2^k-2} & 2\alpha_{2^k-1}^2-1
        \end{bmatrix}
        \begin{bmatrix}
        b_0(l-1)\\
        b_1(l-1)\\
        \vdots\\
        b_{2^k-2}(l-1)\\
        b_{2^k-1}(l-1)
    \end{bmatrix},
\end{equation}
where
\[\gamma_j= 2 \sum_{i:j_i = 0} p_i - 1,
\qquad
j = 1, \dots, 2^k-2.\]
Let $\gamma = \max_{2 \leq j \leq 2^{k}-2} |\gamma_{j}|$. It can be shown that 
\[2p-1 \le \gamma_j \le 1 - 2p, \quad \forall j = 1, \dots, 2^k - 2,\]
which implies $\gamma \le 1 - 2p$. 

From the recurrence \cref{eq:rec_b}, the first and last components satisfy
\[
\begin{bmatrix}
b_{0}(l) \\
b_{2^{k}-1}(l)
\end{bmatrix}
=
M
\begin{bmatrix}
b_{0}(l-1) \\
b_{2^{k}-1}(l-1)
\end{bmatrix}
+
v(l-1),
\]
where
\[
M = \begin{bmatrix}
1 - 2 \alpha_0^2 & - 2\alpha_0 \alpha_{2^{k}-1} \\
2 \alpha_0 \alpha_{2^{k}-1} & 2\alpha_{2^{k}-1}^2 -1
\end{bmatrix},
\qquad
v(l-1) = \begin{bmatrix}-2\alpha_0\left(\sum_{j=1}^{2^{k}-2}\alpha_j b_{j}(l-1)\right) \\
2\alpha_{2^{k}-1}\left(\sum_{j=1}^{2^{k}-2}\alpha_j b_{j}(l-1)\right)
\end{bmatrix}.
\]
For the remaining components, we have
\[
b_{j}(l) = \gamma_{j} b_{j}(l-1)= \dots = \gamma_{j}^{l}b_{j}(0) = \gamma_{j}^{l+1} \alpha_i,\quad \forall j = 1, \dots, 2^k-2.
\]

For the term $v(l-1)$, we have
\begin{equation}
\|v(l-1)\|_{2} = 2\sqrt{\alpha_0^2 + \alpha_{2^k-1}^2} \left |\sum_{j=1}^{2^{k}-2}\alpha_j b_{j}(l-1) \right | 
\leq 2 \left |\sum_{j=1}^{2^{k}-2}\alpha_j^2  \gamma_j^l \right |
\leq 2 \gamma^{l}(1-\alpha_0^2 - \alpha_{2^k-1}^2).
\end{equation}
Since $\|M\|_{2} = 1$ and $b_{0}(0) = b_{2^{k}-1}(0) = 0$, it follows that
\begin{equation}
\begin{aligned}
\sqrt{b_{0}(l)^2 + b_{2^{k}-1}(l)^2}
&\leq \sqrt{b_{0}(l-1)^2 + b_{2^{k}-1}(l-1)^2} + 2\gamma^{l} (1-\alpha_0^2 - \alpha_{2^k-1}^2) \\
&\leq 2\left(\sum_{q=1}^{l}\gamma^{q}\right)(1-\alpha_0^2 - \alpha_{2^k-1}^2) \\
&\leq \frac{2\gamma}{1-\gamma} (1-\alpha_0^2 - \alpha_{2^k-1}^2)\\
& \le \frac1p (1-\alpha_0^2 - \alpha_{2^k-1}^2),
\end{aligned}
\end{equation}
where we used $\gamma \le 1 - 2p$ in the last inequality. Moreover, a direct computation gives
\[
\sqrt{\sum_{j=1}^{2^{k}-2}b_{j}(l)^{2}} \leq \gamma^{l+1}\sqrt{1-\alpha_0^2 - \alpha_{2^k-1}^2}.
\]
Combining the above bounds and using the triangle inequality, we obtain
\[\left\|\left([\ti{G}]_\calI^{\top}\right)^{l}\beta\right\|_{2} \leq \frac1p (1-\alpha_0^2 - \alpha_{2^k-1}^2) + \gamma^{l+1}\sqrt{1-\alpha_0^2 - \alpha_{2^k-1}^2}.
\]
Therefore,
\begin{equation}
\begin{aligned}
\left\|[G]_\calI^{n} - [\ti{G}]_\calI^{n}\right\|_2
&\leq \sum_{l=0}^{n-1} \left\|([G]_\calI-[\ti{G}]_\calI)[\ti{G}]_\calI^{l}\right\|_2 \\
&\leq \frac{2n}p(1-\alpha_0^2 - \alpha_{2^k-1}^2) + 2\sum_{l=0}^{n-1}\gamma^{l+1}\sqrt{1-\alpha_0^2 - \alpha_{2^k-1}^2} \\
&\leq \frac{2n}{p}(1-\alpha_0^2 - \alpha_{2^k-1}^2) + \frac 1p\sqrt{1-\alpha_0^2 - \alpha_{2^k-1}^2}.
\end{aligned}
\end{equation}
Finally, note that the subspace 
\[\mathrm{span}\left\{\ket{\calI_j}: j = 0, \dots, 2^k-1\right\}\]
is invariant under both $G$ and $\ti{G}$, we have $[G^n]_\calI = [G]_\calI^n$ and $[\ti{G}^n]_\calI = [\ti{G}]_\calI^n$, and thus 
\[\left\|\left[G^n\right]_\calI - [\tilde{G}^n]_\calI\right\|_2 = \left\|[G]_\calI^{n} - [\ti{G}]_\calI^{n}\right\|_2 \leq \frac 1p \sqrt{1 - \alpha_0^2 - \alpha_{2^{k} - 1}^2} + \frac{2n}p (1 - \alpha_0^2 - \alpha_{2^{k} - 1}^2).\]
\end{proof}
\begin{remark}
Since \(p \leq 1/k\), the right-hand side of \cref{equ:1} attains its minimum at \(p = 1/k\), i.e., when all Grover operators are sampled with equal probability.
\end{remark}
Using \Cref{lemma:1}, we obtain the main result for the randomized Grover search algorithm with arbitrary number of operators and non-uniform sampling distribution.
\begin{theorem}\label{the:2}
Consider $k\ge 2$ distinct Grover operators $G_0, G_1, \dots, G_{k-1}$, and let the initial state be the uniform superposition state $\ket{\psi}$. At each step, independently apply $G_i$ with probability $p_i>0$. Assume $N > (m+r)^2/r$, then after
\[ T = \Theta\left(\frac{\pi}{4} \sqrt{\frac{N}{r}}\right) \]
steps, measuring the resulting quantum state yields an element in $\cap_{i=0}^{k-1} \calA_i$ with probability
\[\psucc(T) = \mathbb E\left[a_0^2(T)\right] \ge 1 - \frac{4r}{N} - \left(2+\frac{4}{p}\right) \frac{m-r}{\sqrt{rN}} - \frac{2}{p}\sqrt{\frac{m-r}{N}},\]
where $p = \min_{0\le i\le k-1} p_i$.
\end{theorem}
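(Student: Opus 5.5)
The plan is to reduce \Cref{the:2} to the two-operator analysis of \Cref{the:1} via the approximation in \Cref{lemma:1}. As before, Jensen gives $\psucc(T)=\mathbb E[a_0^2(T)]\ge (\mathbb E[a_0(T)])^2$. Independence of the random choices gives $\mathbb E[a_0(T)]=\bra{\calI_0}G^T\ket{\psi}$ with $G=\sum_i p_iG_i$. I then write
\[
\mathbb E[a_0(T)] \ge \bra{\calI_0}\ti G^T\ket{\psi} - \left\|[G^T]_\calI-[\ti G^T]_\calI\right\|_2 ,
\]
using $\|\alpha\|_2=1$. The error term is controlled by \Cref{lemma:1}. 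Since $1-\alpha_0^2-\alpha_{2^k-1}^2=(m-r)/N$, where $\alpha_{2^k-1}^2=|\cap_i\calA_i^c|/N$, the bound reads
\[
\frac1p\sqrt{\frac{m-r}{N}}+\frac{2T}{p}\cdot\frac{m-r}{N}.
\]
Here I interpret $m$ as $|\cup_i\calA_i|$, so that $\alpha_{2^k-1}^2=1-m/N$, consistent with the two-operator case.

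The key structural step is to compute $\bra{\calI_0}\ti G^t\ket{\psi}$ exactly. By \cref{eq:O_jj}, the diagonal of $\sum_ip_i[O_i]_\calI$ has entry $-1$ at $j=0$ and $+1$ at $j=2^k-1$. The matrix $2\alpha\ti\alpha^\top-I$ only reads the coordinates $0$ and $2^k-1$ through $\ti\alpha$. Hence the pair $(\widehat a_0,\widehat a_{2^k-1})$ evolves in closed form under exactly the matrix $M$ of \cref{equ: M_evolution}, with $\alpha_3$ replaced by $\alpha_{2^k-1}$. The other coordinates do not feed back into this pair. The initial data is $(\alpha_0,\alpha_{2^k-1})$, so the closed form $\widehat a_0(t)=2\alpha_0\alpha_{2^k-1}\rho^t\sin(t\theta+\beta)/\Gamma$ from \Cref{the:1} carries over verbatim. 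The quantities $\rho,\theta,\beta,\Gamma$ depend only on $r,m,N$, and the hypothesis $N>(m+r)^2/r$ again guarantees $\Delta$ is imaginary. With the same choice $T=\lfloor \pi/(2\theta)-\beta/\theta\rfloor$, the estimates of \Cref{the:1} give $\widehat a_0(T)\ge \cos\theta\,(1-T(m-r)/N)$. They also give $T=\Theta(\frac\pi4\sqrt{N/r})$, together with $T\le \pi/(2\sin\theta)\le \frac{\pi}{2\sqrt3}\sqrt{N/r}$, where the last step uses $1/\sin\theta\le \frac{1}{\sqrt3}\sqrt{N/r}$.

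Finally, I combine the pieces. Set $x=\widehat a_0(T)-\varepsilon$, where $\varepsilon$ is the lemma's error, and use $x^2\ge 1-2(1-x)$ when $0\le x\le1$ (or trivially otherwise). This yields
\[
1-\psucc\le \sin^2\theta+2T\frac{m-r}{N}+2\varepsilon .
\]
Here $\sin^2\theta\le 4r/N$, and $2T(m-r)/N\le \pi(m-r)/(N\sin\theta)\le 2(m-r)/\sqrt{rN}$, exactly as in \Cref{the:1}. The lemma terms contribute $\frac2p\sqrt{(m-r)/N}$ plus $\frac{4T}{p}\frac{m-r}{N}\le \frac4p\frac{m-r}{\sqrt{rN}}$. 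The last inequality uses $T\le \frac{\pi}{2\sqrt3}\sqrt{N/r}<\sqrt{N/r}$. Summing these gives the stated bound.

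The main obstacle is this constant bookkeeping: getting $T\le\sqrt{N/r}$ tight enough that the lemma's linear-in-$n$ term yields exactly $\frac4p\frac{m-r}{\sqrt{rN}}$, and handling the squaring step without losing factors. Verifying that $\ti G$ really decouples the $(0,2^k-1)$ block with the same sign pattern as $M$ is a short check I expect to go through.
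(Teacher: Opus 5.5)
Your proposal is correct and follows the paper's proof essentially step for step. The shared steps are: Jensen plus independence; the triangle inequality with \Cref{lemma:1}; the observation that under $\ti G$ the pair $(\widehat a_0,\widehat a_{2^k-1})$ evolves by the same matrix $M$ as in \Cref{the:1}; the same choice $T=\lfloor \pi/(2\theta)-\beta/\theta\rfloor$; and the bounds $\sin\theta\le 2\sqrt{r/N}$ and $T\le\sqrt{N/r}$. One small bookkeeping correction is needed. Linearizing via $x^2\ge 2x-1$ at $x=\widehat a_0(T)-\varepsilon$ produces $2(1-\cos\theta)=\sin^2\theta+(1-\cos\theta)^2$ rather than $\sin^2\theta$. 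To obtain exactly $1-\psucc\le\sin^2\theta+2T\frac{m-r}{N}+2\varepsilon$, square first as the paper does: use $(\widehat a_0-\varepsilon)^2\ge\widehat a_0^2-2\varepsilon$ (valid since $\widehat a_0\le 1$) and $\widehat a_0^2\ge\cos^2\theta\,(1-T\frac{m-r}{N})^2\ge\cos^2\theta-2T\frac{m-r}{N}$.
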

\begin{proof}
Let $\ti{G}$ be the operator constructed in \Cref{lemma:1}. Observe that the evolution induced by $\ti{G}$ on the invariant subspace $\mathrm{span}\{\ket{\calI_{0}}, \ket{\calI_{2^k-1}}\}$ is identical to the evolution on the invariant subspace $\mathrm{span}\{\ket{\calI_{0}}, \ket{\calI_{3}}\}$ considered in \Cref{the:1}. Therefore, we choose
\[ T_{0} = \left\lfloor \frac{\pi}{2\theta} - \frac{\beta}{\theta} \right\rfloor = \Theta\left(\frac{\pi}{4} \sqrt{\frac{N}{r}}\right)\]
where $\theta$ and $\beta$ are the same as in \Cref{the:1}.

Using \cref{equ:T0}, together with the estimate in \Cref{lemma:1} and the triangle inequality, we obtain
\[\begin{aligned}
    & \; \mathbb E[a_0(T)]\\
    = & \; \mathbb E \left[\bra{\calI_0}G^{T_{0}}\ket{\psi}\right] \geq \mathbb E\left[\bra{\calI_0}\ti{G}^{T_{0}}\ket{\psi}\right] - \left\|[G^{T_{0}}]_\calI - [\ti{G}^{T_{0}}]_\calI\right\|_2 \\
    \geq & \; \cos\theta(1 - (1 - \alpha_0^2 - \alpha_{2^k-1}^2)T_{0}) - \frac{1}{p}\sqrt{1 - \alpha_0^2 - \alpha_{2^{k}-1}^2} - \frac{2T_{0}}{p}(1 - \alpha_0^2 - \alpha_{2^{k}-1}^2).
\end{aligned}\]
Hence, the success probability satisfies
\[\begin{aligned}
    & \psucc(T) = \mathbb E\left[a_0^2(T)\right] \ge \left(\mathbb E\left[a_0(T)\right]\right)^2\\
    \ge \; &\cos^2 \theta \left(1 - \left(1 - \alpha_0^2 - \alpha_{2^k-1}^2\right)T\right)^2 - \frac{2}{p}\sqrt{1 - \alpha_0^2 - \alpha_{2^{k}-1}^2} - \frac{4T_{0}}{p}(1 - \alpha_0^2 - \alpha_{2^{k}-1}^2)\\
    \ge \; &  1 - \sin^2\theta - 2(1 - \alpha_0^2 - \alpha_{2^{k}-1}^2) T - \frac{2}{p}\sqrt{1 - \alpha_0^2 - \alpha_{2^{k}-1}^2} - \frac{4T_{0}}{p}(1 - \alpha_0^2 - \alpha_{2^{k}-1}^2)\\
    \ge \; & 1 - \frac{4r}{N} - \left(2+\frac{4}{p}\right) \frac{m-r}{\sqrt{rN}} - \frac{2}{p}\sqrt{\frac{m-r}{N}}, 
\end{aligned}\]
where we use the bound $\sin\theta \le 2\sqrt{\frac rN}$ and $T \le \sqrt{\frac Nr}$ from \Cref{the:1}.
\end{proof}

According to \Cref{the:2}, the randomized Grover search algorithm amplify the success probability approaching $1$ if $1/p = o(\sqrt{N})$ under mild assumptions between $m$, $r$, and $N$. In contrast, a deterministic sequence of Grover operators does not necessarily amplify the success probability to approach $1$. To illustrate this, consider the sequence
\begin{equation}\label{eq:determined_seq}
    G_0, G_1, G_0, G_0, G_1, G_0, G_0, \dots
\end{equation}
Let $r_0 := |\mathcal{A}_0|$, $r_1 := |\mathcal{A}_1|$, and $r:=|\mathcal{A}_0 \cap \mathcal{A}_1|$, and assume $N > \max\{4r_0, 4r_1\}$. The operator $G_0 G_0 G_1$ appears periodically in this sequence and can be rewritten as  
\[\begin{aligned}
    G_0 G_0 G_1 &= G_0 (2\ket{\psi}\bra{\psi} - I) O_0 (2\ket{\psi}\bra{\psi} - I) O_1\\
    & = G_0  (2\ket{\psi}\bra{\psi} - I) G_0^{\dagger} O_1 \\
    &= \left(2 G_0 \ket{\psi}\bra{\psi} G_0^{\dagger} - I \right) O_1.
\end{aligned}   
\]
Operators of this form were studied in~\cite{Grover1998}, where the Walsh--Hadamard transform in the standard Grover algorithm is replaced by a general unitary operator. Following the same line of argument, we identify a two-dimensional invariant subspace of $G_0 G_0 G_1$, spanned by
\[
\ket{u} = \frac{1}{\sqrt{|\calA_1|}} \sum_{y \in \mathcal{A}_1} \ket{y}, 
\quad 
\ket{v} = G_0 \ket{\psi}.
\]
Restricted $G_0G_0G_1$ to the subspace $\mathrm{span}\left\{\ket{u}, \ket{v}\right\}$, and expressing it in the basis $\left\{\ket{u}, \ket{v}\right\}$, we obtain the matrix representation
\begin{equation}\label{eq:W}
    W := \begin{bmatrix}
        1 & 2 \braket{u}{v} \\
        -2 \braket{v}{u} & 1 - 4 |\braket{u}{v}|^2
    \end{bmatrix}.
\end{equation}
Since $\operatorname{Tr}(W) = 2 - 4 |\braket{u}{v}|^2$ and $\det(W) = 1$, the powers of $W$ satisfy the recurrence
\[W^n 
= (2 - 4 |\braket{u}{v}|^2)W^{n-1} 
- W^{n-2}.\]
Solving this recurrence yields
\[W^n = \frac{1}{\sin\theta}
\begin{bmatrix}
\sin n\theta - \sin(n-1)\theta & 2\braket{u}{v}\sin n\theta \\
-2\braket{v}{u}\sin n\theta & (1-4|\braket{u}{v}|^2)\sin n\theta - \sin(n-1)\theta
\end{bmatrix},\]
where $\sin\frac{\theta}{2} := \braket{u}{v}$. The derivation of the expression is deferred to \Cref{app:w}.

After one step of the Grover sequence, the quantum state becomes $G_0 \ket{\psi}$, which lies in the invariant subspace spanned by $\ket{u}$ and $\ket{v}$. After $n$ applications of $G_0G_0G_1$, the success probability satisfies
\begin{equation}   ~\label{equ:p_succ_contrast}
\begin{aligned}
    \psucc^{(n)} = &\left| \bra{\mathcal{I}_0} (G_0 G_0 G_1)^n G_0 \ket{\psi} \right|^2 \\
    =& \left( \frac{2\sin\frac{\theta}{2}\sin n\theta}{\sin\theta}\braket{\calI_0}{u}
    +
    \frac{(1 - 4\sin^2\frac{\theta}{2})\sin n\theta - \sin(n-1)\theta}{\sin\theta} \braket{\calI_0}{v}
    \right)^2 \\
    =& \left(\frac{2\sin\frac{\theta}{2}\sin n\theta}{\sin\theta}\sqrt{\frac{r}{r_1}} 
    + 
    \left(3 - \frac{4r_0}{N}\right)\sqrt{\frac{r}{N}}
    \cdot
    \frac{(1 - 4\sin^2\frac{\theta}{2})\sin n\theta - \sin(n-1)\theta}{\sin\theta}
    \right)^2 \\
    =&
    \frac{4\sin^{2}\frac{\theta}{2}}{\sin^{2}\theta}
    \left(
    \sin n\theta \sqrt{\frac{r}{r_1}} 
    + 
    2\left(3 - \frac{4r_0}{N}\right)\sqrt{\frac{r}{N}}
    \left(
    \cos\left(n-\tfrac{1}{2}\right)\theta 
    - 
    2\sin\frac{\theta}{2}\sin n\theta
    \right)
    \right)^2\\
    \le & \frac{4\sin^{2}\frac{\theta}{2}}{\sin^{2}\theta}
    \left(
    \sqrt{\frac{r}{b}} + 18\sqrt{\frac{r}{N}}
    \right)^2.
\end{aligned}
\end{equation}
This upper bound can be significantly smaller than $1$ when $r \ll r_1 < N$. Moreover, the success probabilities associated with the sequences $G_1(G_0 G_0 G_1)^n G_0$ and $G_0G_1(G_0 G_0 G_1)^n G_0$ can also be bounded far below $1$. The corresponding analysis is presented in \Cref{app:deterministic-prefix}. Therefore, the deterministic sequence \cref{eq:determined_seq} provides an example in which the success probability does not necessarily approach $1$. Numerical experiments are conducted in \Cref{sec:comp_deter_rand} to verify the above analysis. 

\subsection{Biased sampling distribution}\label{sec:biased_distribution}
We now consider the scenario where $G_0$ is substantially cheaper to implement than the remaining Grover operators, and the goal is to minimize the number of queries to $G_i, i = 1, \dots, k-1$. A related problem was studied in \cite{khadiev2023intersection}, where the authors considered the two-operator setting and proposed a deterministic construction under the assumptions that $\calA_1 \subset \calA_0$ and $|\calA_1| \ll |\calA_0|$. Our framework requires neither of these restrictions. 

By adopting a suitably biased sampling distribution, we can significantly reduce the expected number of applications of the remaining operators, while applying $G_0$ in the majority of iterations. The following corollary establishes the corresponding performance guarantee.
\begin{corollary}\label{lemma: delta_k}
    Let $\delta \in (0, 1/k)$, and suppose that
    \begin{equation}\label{eq:cond_p1}
     p := \frac{1}{\delta - \frac{4r}{N} - \frac{2(m-r)}{\sqrt{rN}}} \left(\frac{4(m-r)}{\sqrt{rN}} + 2\sqrt{\frac{m-r}{N}}\right) < \frac1k.
    \end{equation}
    At each step, independently apply an operator chosen from $\{G_0, G_1, \dots, G_{k-1}\}$ according to the probability distribution
    \[
    p_0 = 1 - (k-1) p, \qquad p_1 = p_2 = \cdots = p_{k-1} = p.
    \] 
    Then, after
    \[T = \Theta \left(\frac \pi 4 \sqrt{\frac Nr}\right)\]
    steps, measuring the resulting quantum state yields an element in $\cap_{i=0}^{k-1}\calA_i$ with probability
    \[\psucc(T) = \mathbb E\left[a_0^2 (T)\right] \ge 1 - \delta. \]
    Moreover, the expected total number of applications of $G_1, \dots, G_{k-1}$ is
    \[\mathcal O\left(\frac k\delta \cdot \frac{m}{r}\right).\]
\end{corollary}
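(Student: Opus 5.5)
The plan is to obtain the corollary directly from \Cref{the:2} by specialising the sampling distribution and solving the resulting inequality for $p$.

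\textbf{Step 1: identify the minimal probability.} For the distribution $p_0 = 1-(k-1)p$ and $p_1=\dots=p_{k-1}=p$, the assumption $p<1/k$ in \cref{eq:cond_p1} gives $p_0 = 1-(k-1)p > 1/k > p$. Hence all $p_i>0$ and $\min_i p_i = p$. This is exactly the hypothesis of \Cref{the:2}, with the same quantity $p$ appearing in its bound.

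\textbf{Step 2: apply \Cref{the:2}.} It gives, with $T=\Theta(\frac\pi4\sqrt{N/r})$ (the same $T_0=\lfloor \pi/(2\theta)-\beta/\theta\rfloor$, which does not depend on the $p_i$),
\[
\psucc(T)\ge 1-\frac{4r}{N}-\frac{2(m-r)}{\sqrt{rN}}-\frac1p\left(\frac{4(m-r)}{\sqrt{rN}}+2\sqrt{\frac{m-r}{N}}\right).
\]
The definition \cref{eq:cond_p1} is equivalent to
\[
\frac1p\left(\frac{4(m-r)}{\sqrt{rN}}+2\sqrt{\frac{m-r}{N}}\right)=\delta-\frac{4r}{N}-\frac{2(m-r)}{\sqrt{rN}}.
\]
This requires the denominator in \cref{eq:cond_p1} to be positive. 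That is implicit, since $p$ must be a positive probability. Substituting this identity turns the right-hand side into exactly $1-\delta$.

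\textbf{Step 3: count the expensive queries.} By linearity of expectation, the expected number of applications of $G_1,\dots,G_{k-1}$ is $(k-1)pT \le kpT$. I then bound $pT$ using $T\le \frac\pi4\sqrt{N/r}(1-\frac{(m+r)^2}{4Nr})^{-1/2}=\mathcal O(\sqrt{N/r})$, which holds since $N>(m+r)^2/r$. The two terms in the numerator of $p$ contribute as follows:
\[
\frac{m-r}{\sqrt{rN}}\sqrt{\frac Nr}=\frac{m-r}{r}, \qquad \sqrt{\frac{m-r}{N}}\sqrt{\frac Nr}=\sqrt{\frac{m-r}{r}}\le \frac{m}{r}.
\]
So the numerator times $T$ is $\mathcal O(m/r)$.

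For the denominator, I will argue that $\delta-\frac{4r}{N}-\frac{2(m-r)}{\sqrt{rN}}\ge c\,\delta$ for a constant $c$. \textbf{This is the main obstacle:} it is not guaranteed for every admissible $\delta$, because the denominator could be an arbitrarily small positive number. I would handle it in one of two ways. One option is to read the $\mathcal O$ in the regime where the subtracted terms are at most, say, $\delta/2$; this is natural since these are the $o(1)$ error terms of \Cref{the:2}. Otherwise, I would state the bound in the exact form
\[
\mathcal O\!\left(\frac{k\,m/r}{\delta-\frac{4r}{N}-\frac{2(m-r)}{\sqrt{rN}}}\right)
\]
and note that it reduces to $\mathcal O(\frac k\delta\cdot\frac mr)$ once $\delta$ dominates those terms. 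Granting this, $kpT=\mathcal O(\frac k\delta\cdot\frac mr)$, which completes the plan.
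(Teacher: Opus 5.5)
Your approach is the paper's. Its proof invokes \Cref{the:2}, substitutes \cref{eq:cond_p1} to obtain $1-\delta$ (your Step 2), and then simply asserts $(k-1)pT=\mathcal O\bigl(\frac k\delta\cdot\frac mr\bigr)$ with no further argument. You are right that this last step needs justification. However, you do not need to restrict the regime or weaken the statement: the hypothesis $p<1/k$ closes it. Write $e:=\frac{4r}{N}+\frac{2(m-r)}{\sqrt{rN}}$ and split into two cases.

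If $e\le\delta/2$, the denominator in \cref{eq:cond_p1} is at least $\delta/2$, and your Step~3 computation gives the bound. If $e>\delta/2$, use $p<1/k$ to get $(k-1)pT<T$, and note that
\[
\frac{4r}{N}=4r\sqrt{\frac rN}\cdot\frac{1}{\sqrt{rN}}\le\frac{4m}{\sqrt{rN}},\qquad \frac{2(m-r)}{\sqrt{rN}}\le\frac{2m}{\sqrt{rN}}.
\]
Hence $\delta<2e\le 12m/\sqrt{rN}$, i.e.\ $\sqrt{N/r}<12m/(\delta r)$, so $T=\mathcal O(\sqrt{N/r})=\mathcal O\bigl(m/(\delta r)\bigr)$. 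In both cases $(k-1)pT=\mathcal O\bigl(\frac k\delta\cdot\frac mr\bigr)$ for every admissible $\delta$, with no extra assumption.

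One smaller omission, which the paper shares. In Step~1 you say the hypotheses of \Cref{the:2} are met, but that theorem, and your bound $T=\mathcal O(\sqrt{N/r})$ in Step~3, also require $N>(m+r)^2/r$. The corollary does not state this, but it follows from its hypotheses. Positivity of the denominator together with $p<1/k$ gives
\[
k\left(\frac{4(m-r)}{\sqrt{rN}}+2\sqrt{\frac{m-r}{N}}\right)+e<\delta<\frac1k\le\frac12 .
\]
Hence $4r/N<\frac12$ and $(4k+2)\frac{m-r}{\sqrt{rN}}<\frac12$. Therefore
\[
\frac{m+r}{\sqrt{rN}}=\frac{m-r}{\sqrt{rN}}+2\sqrt{\frac rN}<\frac1{20}+\frac1{\sqrt2}<1,
\]
which is exactly $N>(m+r)^2/r$.
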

\begin{proof}
    According to \Cref{the:2}, we have 
    \begin{equation}
    \begin{aligned}
        1 - \mathbb{E}[a^{2}_{0}(T)] 
        &\leq \frac{4r}{N} + \left(2 + \frac4p\right)\frac{m-r}{\sqrt{rN}} - \frac2p \sqrt{\frac{m-r}{N}}\\
        &= \frac{4r}{N} + \frac{2(m-r)}{\sqrt{rN}} + \frac{\frac{4(m-r)}{\sqrt{rN}} + 2\sqrt{\frac{m-r}{N}}}{p} = \delta.
    \end{aligned}
\end{equation}
Therefore,
\[\psucc(T) = \mathbb E\left[a_0^2 (T)\right] \ge 1 - \delta.\]
The expected total number of applications of $G_1, \dots, G_{k-1}$ is 
\[pT(k-1) = \mathcal O\left(\frac k\delta \cdot \frac mr\right).\]
\end{proof}
\subsection{Optimality of the random Grover search algorithm}\label{sec:optimality}
In this section, we analyze the optimality of our approach in terms of query complexity
\begin{theorem}[Optimality of the random Grover search algorithm]\label{optimal}
Consider $k\ge 1$ distinct Grover operators $G_{0},G_{1},\dots,G_{k-1}$. For any sequence $G_{i_1},G_{i_2},\dots,G_{i_t}$, suppose that the corresponding success probability satisfies
\[
\psucc(t)
=
\left|
\bra{\calI_0}
G_{i_t}G_{i_{t-1}}\cdots G_{i_1}
\ket{\psi}
\right|^2
\geq
1-\delta,
\]
where $\delta = \mathcal O(1/\sqrt{N})$. Then
\[
t
=
\Omega\left(\frac{\pi}{4}\sqrt{\frac{N}{r}}\right).
\]
\end{theorem}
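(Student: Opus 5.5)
We reduce the claim to the standard quantum query lower bound for unstructured search, in its hybrid-argument form. Each Grover operator $G_i = DO_i$ uses exactly one query, namely one application of the phase oracle $O_i$. Write $O_i = O_{\calA_i}$, and let $O_{\calA}$ denote the phase oracle of the intersection $\calA = \calI_0$. The task is to show that a length-$t$ product of the $G_i$ can be simulated by a $t$-query algorithm for a search problem whose marked set is $\calI_0$. Then the BBBV/Zalka-type bound applies: any $t$-query algorithm finding one of $r$ marked items among $N$ with probability $1-\delta$ needs $t \ge \frac{\pi}{4}\sqrt{N/r}\,(1-o(1))$, with the constant governed by $\delta$.

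Directly simulating $O_i$ from $O_\calA$ is not possible, since the sets $\calA_i \supseteq \calA$ carry extra information. So instead of a reduction, I would run the hybrid argument itself on this model.

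\textbf{Step 1: a family of instances.} Fix the sets $\calA_i \setminus \calA$, and let the true intersection vary over the $r$-subsets $\calA_\sigma = \sigma(\calA)$, where $\sigma$ ranges over permutations of $\calS$. Take $\calA_{i,\sigma} = \sigma(\calA_i)$. This is harmless, because the sequence $(i_1,\dots,i_t)$ is fixed in advance and the success probability is invariant under relabeling. Averaging over $\sigma$ is then legitimate, since a fixed sequence achieves $\psucc \ge 1-\delta$ for every relabeled instance.

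\textbf{Step 2: hybrid with a reference evolution.} Compare the evolution with a reference evolution in which each $O_{i,\sigma}$ is replaced by the oracle $O'_{i,\sigma}$ of $\calA_{i,\sigma}\setminus \calA_\sigma$. The reference evolution carries no information about $\calA_\sigma$ beyond its complement structure. The two oracles differ only on $\calA_\sigma$, so the standard telescoping bound gives
\[
\bigl\|\ket{\psi_t^\sigma} - \ket{\phi_t^\sigma}\bigr\| \le 2\sum_{s<t}\bigl\|\Pi_{\calA_\sigma}\ket{\phi_s^\sigma}\bigr\|.
\]
Averaging over $\sigma$ and using Cauchy--Schwarz yields $\mathbb E_\sigma\|\Pi_{\calA_\sigma}\ket{\phi_s}\|^2 \lesssim r/(N-m+r)$, provided the reference states $\ket{\phi_s}$ spread their amplitude over the remaining elements roughly uniformly under the symmetrization. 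This gives $\mathbb E\|\psi_t-\phi_t\| \le 2t\sqrt{r/N}\,(1+O(m/N))$.

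\textbf{Step 3: contradiction from success.} The reference state must have small overlap with $\ket{\calI_0}$ on average, about $r/N$. High success probability, however, forces $\|\psi_t-\phi_t\|$ to be close to $\sqrt2$, up to $O(\sqrt\delta)$. Combining these gives $t = \Omega(\sqrt{N/r})$. To obtain the sharp constant $\pi/4$, I would instead use Zalka's refined angle argument in the invariant subspace. I would also use the fact, from \Cref{the:1}, that a single Grover step rotates the overlap angle with $\ket{\calI_0}$ by at most $2\arcsin\sqrt{r/N}+O(\cdot)$, so that $\arcsin\sqrt{1-\delta}\le (2t+1)\arcsin\sqrt{r/N}$.

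\textbf{Main obstacle.} The hardest step is justifying the per-step angle bound when the $O_i$ mark supersets of $\calI_0$. A single $G_i$ with $|\calA_i|$ large can rotate by up to $2\arcsin\sqrt{|\calA_i|/N}$, but most of that rotation moves amplitude into $\calA_i\setminus\calA$ rather than into $\calI_0$. My first attempt would be to track the quantity $\theta_s = \arcsin|\braket{\calI_0}{\psi_s}|$. Restricted to the span of the $\ket{\calI_j}$, each $G_i$ is a reflection about $\ket\psi$ composed with a diagonal sign flip that is $-1$ on $\ket{\calI_0}$. The sign flip preserves $|\braket{\calI_0}{\cdot}|$, and the reflection changes the amplitude on $\calI_0$ by $2\alpha_0\braket{\psi}{\cdot}$. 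I would then bound the growth of $\theta_s$ by $2\alpha_0=2\sqrt{r/N}$ per step. This already gives $t\ge(\arcsin\sqrt{1-\delta}-\alpha_0)/(2\alpha_0)\approx\frac{\pi}{4}\sqrt{N/r}$, because $\delta = O(1/\sqrt N)$ makes $\arcsin\sqrt{1-\delta}=\pi/2-O(N^{-1/4})$. If this elementary amplitude-increment bound holds, the hybrid machinery above is unnecessary and the proof is short.
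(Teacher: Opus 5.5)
Your closing paragraph follows the paper's route: track $\theta_s=\arcsin|a_0(s)|$ and use that every $O_i$ acts as $-1$ on $\ket{\calI_0}$. That gives the correct identity $a_0(s)=a_0(s-1)+2\alpha_0\bra{\psi}O_{i_s}\ket{\psi_{s-1}}$. The gap is the next step. Bounding $|\bra{\psi}O_{i_s}\ket{\psi_{s-1}}|\le 1$ controls the increment of $\sin\theta_s$, not of $\theta_s$. Because $\arcsin$ is steep near $1$, a sine increment of $2\alpha_0$ can correspond to a much larger angle increment. Iterating what you actually proved gives only $|a_0(t)|\le(2t+1)\alpha_0$, hence
\[
t\ \ge\ \tfrac12\Bigl(\sqrt{(1-\delta)N/r}-1\Bigr).
\]
This has constant $1/2$, not the claimed $t\ge(\arcsin\sqrt{1-\delta}-\alpha_0)/(2\alpha_0)\approx\frac{\pi}{4}\sqrt{N/r}$. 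The per-step angle bound of $2\alpha_0$ is in fact false. For plain Grover with $k=1$ and $\calA_0=\calI_0$, each step rotates by exactly $2\arcsin\alpha_0>2\alpha_0$.

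The missing idea is the paper's Cauchy--Schwarz refinement. Write $\bra{\psi}O_{i_s}\ket{\psi_{s-1}}=-\alpha_0a_0(s-1)+\sum_{j\ge1}\pm\alpha_ja_j(s-1)$. Bound the sum by $\sqrt{1-\alpha_0^2}\,\sqrt{1-a_0(s-1)^2}$, using $\sum_{j\ge1}\alpha_j^2=1-\alpha_0^2$ and normalization of the state. With $\sin\theta_0=\alpha_0$ this gives
\[
\sin\theta_s\le\cos(2\theta_0)\sin\theta_{s-1}+\sin(2\theta_0)\cos\theta_{s-1}=\sin(\theta_{s-1}+2\theta_0).
\]
This is a genuine angle bound, with per-step increment $2\theta_0=2\arcsin\alpha_0$. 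The induction then needs the range restriction $(2t+1)\theta_0\le\pi/2$ so that sine is monotone. It yields $t\ge\arcsin\sqrt{1-\delta}/(2\theta_0)-\frac12$, and $\delta=\mathcal O(1/\sqrt N)$ turns this into roughly $\frac{\pi}{4}\sqrt{N/r}$.

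The theorem is stated only up to $\Omega(\cdot)$, so your cruder amplitude bound already proves it, for any fixed $\delta<1$. It does not, however, justify the sharp $\pi/4$ constant you assert.

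Steps 1--3 can be dropped, and they have their own problems:
\begin{itemize}
\item The averaging over $\sigma$ is vacuous. You relabel all sets jointly and $\ket{\psi}$ and $D$ are permutation-invariant, so $\|\Pi_{\calA_\sigma}\ket{\phi_s^\sigma}\|$ does not depend on $\sigma$.
\item The uniformity you assume in Step 2 actually holds exactly, for a different reason. The reference oracles treat every element of $\calA\cup\calI_{2^k-1}$ identically, so $\|\Pi_{\calA}\ket{\phi_s}\|^2\le r/(r+|\calI_{2^k-1}|)$.
\item \Cref{the:1} concerns the averaged operator and provides no single-step rotation bound.
\end{itemize}
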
   
\begin{proof}
Let
\[
\ket{\psi_t}
:=
G_{i_t}G_{i_{t-1}}\cdots G_{i_1}\ket{\psi}
=
\sum_{j=0}^{2^k-1} a_j(t)\ket{\calI_j},
\]
and define
\[
\sin\theta_t
:=
|a_0(t)|
=
\left|\braket{\calI_0}{\psi_t}\right|,
\qquad \theta_t\in\left[0,\frac{\pi}{2}\right].
\]
In particular,
\[
\sin\theta_0
=
|a_0(0)|
=
\alpha_0.
\]
Since
\[
O_i\ket{\calI_j}
=
(-1)^{1 + j_i}\ket{\calI_j},
\]
where we use the binary representation of $j = (j_{k-1}\cdots j_1j_0)_2$, and
\[ G_{i_t} = (2\ket{\psi}\bra{\psi}-I)O_{i_t}, \qquad \ket{\psi} = \sum_{j=0}^{2^k-1}\alpha_j\ket{\calI_j},\]
we obtain
\[ \begin{aligned}
a_0(t)
&=
\bra{\calI_0}
(2\ket{\psi}\bra{\psi}-I)
O_{i_t}
\ket{\psi_{t-1}} = 2\braket{\calI_0}{\psi} \bra{\psi} O_{i_t} \ket{\psi_{t-1}} - \bra{\calI_0} O_{i_t} \ket{\psi_{t-1}}\\
&=
2\alpha_0
\sum_{j=0}^{2^k-1}
(-1)^{1 + j_{i_t}}
\alpha_j a_j(t-1)
-
(-1)^{1 + 0_{i_t}}a_0(t-1)\\
&= a_0(t-1) + 2\alpha_0 \sum_{j=0}^{2^k-1}
(-1)^{1 + j_{i_t}}
\alpha_j a_j(t-1),
\end{aligned}
\]
where we use the fact that $0_i = 0$ for all $i = 0, \dots, k-1$ in the last inequality. Taking absolute values gives
\[
\sin\theta_t
=
|a_0(t)|
=
\left|
a_0(t-1)
+
2\alpha_0
\sum_{j=0}^{2^k-1}
(-1)^{1 + j_{i_t}}
\alpha_j a_j(t-1)
\right|.
\]
Therefore,
\[\begin{aligned}
    \sin\theta_t &\leq (1-2\alpha_0^{2})|a_{0}(t-1)| + 2\alpha_0 \sum_{j=1}^{2^{k}-1}\alpha_{j}| a_{j}(t-1)| \\
    &\leq (1-2\alpha_0^{2})|a_{0}(t-1)| + 2\alpha_0 \sqrt{\sum_{j=1}^{2^{k}-1}\alpha_{j}^{2}}\sqrt{\sum_{j=1}^{2^{k}-1} a_{j}(t-1)^{2}} \\
    &\leq (1-2\alpha_0^{2})| a_{0}(t-1)| + 2\alpha_0 \sqrt{1-\alpha_0^{2}}\sqrt{1- a_{0}(t-1)^{2}} \\
    & = \cos(2\theta_0) \sin\theta_{t-1} + \sin(2\theta_0) \cos\theta_{t-1}\\
    & = \sin(\theta_{t-1} + 2\theta_{0}).
\end{aligned}\]
It can be shown by induction that
\begin{equation}\label{eq:theta_t}
    \theta_t\leq (2t+1)\theta_0 ,
\quad
\forall t\leq \frac{\pi}{4\theta_0}-\frac12.
\end{equation}
The case \(t=0\) is immediate. If the relation holds for \(t-1\), then for $t \le \frac{\pi}{4\theta_0}-\frac12$, we have
\[
\theta_{t-1}+2\theta_0
\leq
(2t-1)\theta_0+2\theta_0
=
(2t+1)\theta_0
\leq
\frac{\pi}{2}.
\]
By the monotonicity of the sine function on $\left(0, \frac\pi2\right]$ and $\sin\theta_t \leq \sin(\theta_{t-1}+2\theta_0)$, we have
\[
\theta_t
\leq
\theta_{t-1}+2\theta_0
\leq
(2t+1)\theta_0.
\]
Consequently, the success probability at time $t$ satisfies
\[\psucc(t) = \sin^2 \theta_t \le \sin^2 \left((2t+1)\theta_0\right), \quad \forall t\le \frac{\pi}{4\theta_0}-\frac12. \]
Given $\sin^2\left((2t+1)\theta_0\right) \ge \psucc(t) > 1-\delta$, we further obtain
\begin{equation}\label{eq:lower_bound}
    \begin{aligned}
        t &\geq \frac{\arcsin\sqrt{1-\delta}}{2\theta_0} - \frac{1}{2}
        \geq \frac{\frac{\pi}{2}-\sqrt{2\delta}}{2(\sin\theta_0 + \sin^{3}\theta_0)} - \frac{1}{2}
        \geq (\frac{\pi}{4} - \sqrt{\delta})\cdot \sqrt{\frac{N}{r}}\cdot\frac{N}{N+r} - \frac{1}{2},
    \end{aligned}
    \end{equation}
    where we use the inequalities $\arcsin\sqrt{1-\delta}\geq \pi/2-\sqrt{2\delta}$ and $\theta_0 \leq \sin\theta_0 + \sin^3\theta_0$.
    Let $\delta = \mathcal O(1/\sqrt{N})$, we have 
    \[t = \Omega\left(\frac\pi 4\sqrt{\frac Nr}\right).\] 
    
\end{proof}

\section{Numerical Experiments}\label{sec:numeric}
\FloatBarrier
In this section, we present several numerical experiments to verify the theoretical results. The code for all simulations is publicly available\footnote{\url{https://github.com/greenttt-cell/RGS_code}}. 

\subsection{Comparison between deterministic and random strategies}\label{sec:comp_deter_rand}
In this group of experiments, we compare a deterministic periodic sequence with the randomized strategy for two Grover operators. The deterministic sequence is
\[G_0, G_1, G_0, G_0, G_1, G_0, G_0, \cdots\]
so that \(G_0\) and \(G_1\) are applied with frequencies \(2/3\) and \(1/3\), respectively. 
In the randomized strategy, \(G_0\) and \(G_1\) are independently selected with probabilities \(2/3\) and \(1/3\). For each parameter setting, we perform $50$ independent randomized trials and plot the empirical mean evolution of the target success probability \(a_0^2(t)\).

\begin{figure}[t]
    \centering

    \begin{subfigure}{0.48\textwidth}
        \centering
        \includegraphics[width=\linewidth]{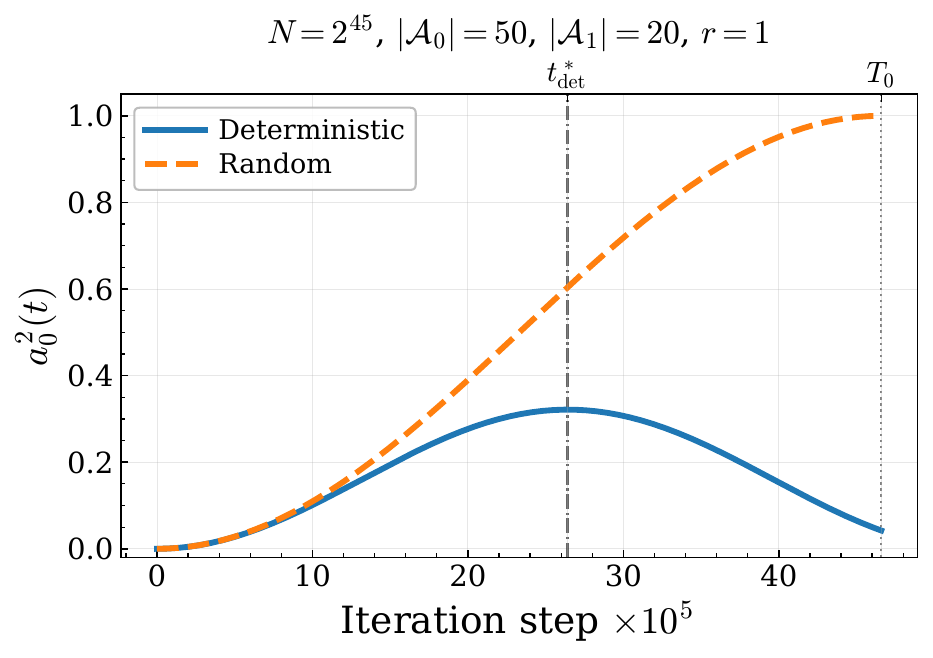}
        \caption{}
        \label{fig:cmp1}
    \end{subfigure}
    \hfill
    \begin{subfigure}{0.48\textwidth}
        \centering
        \includegraphics[width=\linewidth]{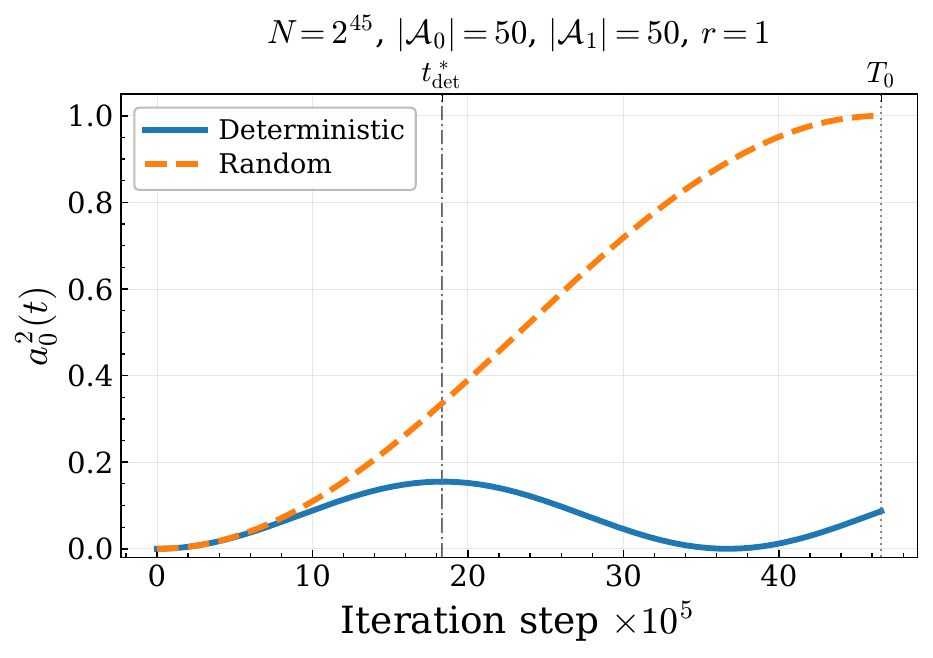}
        \caption{}
        \label{fig:cmp2}
    \end{subfigure}

    \vspace{0.6em}

    \begin{subfigure}{0.48\textwidth}
        \centering
        \includegraphics[width=\linewidth]{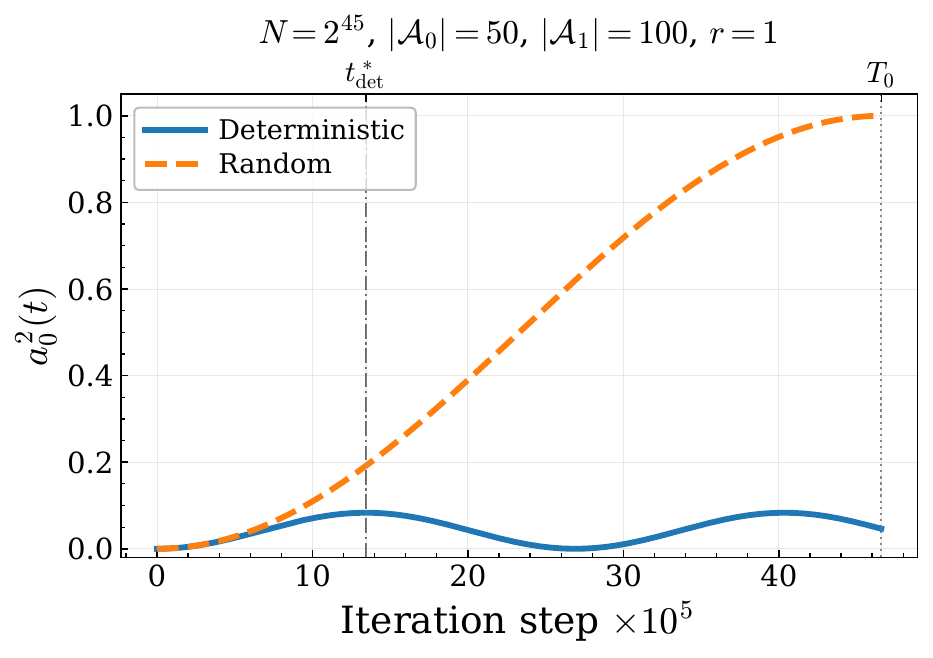}
        \caption{}
        \label{fig:cmp3}
    \end{subfigure}
    \hfill
    \begin{subfigure}{0.48\textwidth}
        \centering
        \includegraphics[width=\linewidth]{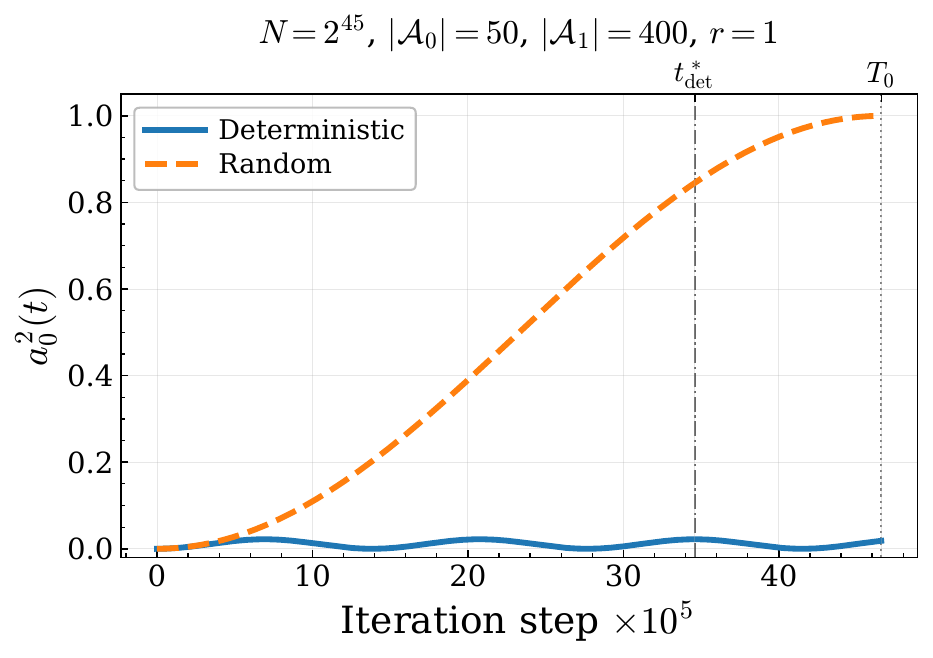}
        \caption{}
        \label{fig:cmp4}
    \end{subfigure}

    \caption{Comparison between the deterministic periodic sequence and the corresponding randomized strategy under four parameter settings. In each panel, the solid curve represents the deterministic sequence, while the dashed curve shows the empirical mean over 50 independent randomized trials. 
    The vertical dotted line marks the stopping time \(T\), and the dash-dotted vertical line marks the fist peak time \(t_{\mathrm{det}}^\ast\) of the deterministic curve. The corresponding values of $t_{\mathrm{det}}^\ast$ are $2.6\times 10^6$, $1.8\times 10^6$, $1.3\times 10^6$, and $0.7\times 10^6$ for the four panels, respectively.}
    \label{fig:comparison_all}
\end{figure}

\Cref{fig:comparison_all} compares the two approaches under four parameter settings. In all cases, the deterministic periodic sequence exhibits persistent oscillations and fails to amplify the success probability close to $1$. By contrast, the randomized strategy yields a nearly monotone increase in the success probability and reaches a value close to $1$ near the stopping time $T$. One may attempt to improve the deterministic approach by terminating the evolution at its first peak time \(t_{\mathrm{det}}^\ast\) and boosting the overall success probability through repeated measurements. However, this strategy remains less efficient than the randomized scheme. For example, in \Cref{fig:cmp2}
\[t_{\mathrm{det}}^\ast > 1.8\times 10^6, \quad a_0^2(t_{\mathrm{det}}^\ast) < 0.2,\quad \text{and}\quad  T < 5.0\times 10^6.\]
Repeating the deterministic procedure three times yields a total cost exceeding $5.4\times 10^6$ iterations, while the overall success probability remains below
\[1 - (1-0.2)^3 = 0.488.\]
Thus, even when combined with repetition, the deterministic strategy is unable to match the performance of the randomized approach.

The advantage of randomized strategy persists as \(|\mathcal A_1|\) varies over \(20,50,100,\) and \(400\) while the overlap size $|\calA_0\cap \calA_1|$ is fixed. As \(|\mathcal A_1|\) increases, the deterministic sequence becomes progressively less effective, with its maximum success probability decreasing substantially. By contrast, the randomized strategy consistently achieves a success probability close to one.

\subsection{Sampling Grover operators with a biased distribution}
We next validate \Cref{lemma: delta_k}, which shows that a suitably biased sampling distribution can maintain a high success probability while substantially reducing the use of expensive Grover operators. All simulations are implemented using the reduced matrix representations of the Grover operators in the basis \(\{\ket{\calI_j}:j = 0, \dots, 2^k-1\}\). In all experiments, the evolution starts from the uniform superposition state and is run for 
\[T=\left\lfloor\frac{\pi}{4}\sqrt{\frac{N}{r}}\right\rfloor\]
iterations. At each step, a Grover operator is sampled independently according to the distribution specified in \Cref{eq:cond_p1}. For each parameter setting, we perform $50$ independent trails and estimate the expected success probability
\[\psucc(t)=\mathbb{E}\left[a_0^2(t)\right], \quad t = 1, \dots, T.\]

We first consider the two-operator setting. Fixing \(N=2^{40}\), \(|\calA_1|=100\), and \(r=1\), we vary \(|\calA_0|\in \{100, 1000\}\) and the tolerance parameter \(\delta\in\{0.1,0.3,0.5\}\). The results are shown in \Cref{fig:delta_compare}. In all cases, the success probability increases steadily and reaches a value exceeding the theoretical guarantee \(1-\delta\) near the stopping time \(T\). As expected, smaller values of \(\delta\) lead to higher final success probabilities. For comparison, we also include the evolution obtained by repeatedly applying only \(G_0\). The resulting success probability remains substantially lower, indicating that even infrequent applications of \(G_1\) play an essential role in the amplification process.
\begin{figure}[htbp]
    \centering

    \begin{subfigure}{0.45\textwidth}
        \centering
        \includegraphics[width=\linewidth]{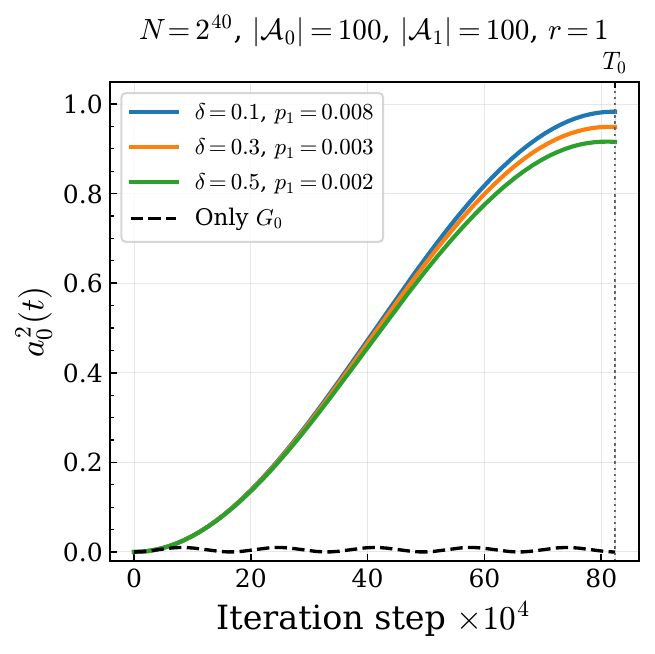}
        \caption{}
        \label{fig:delta_compare_a}
    \end{subfigure}
    \hfill
    \begin{subfigure}{0.45\textwidth}
        \centering
        \includegraphics[width=\linewidth]{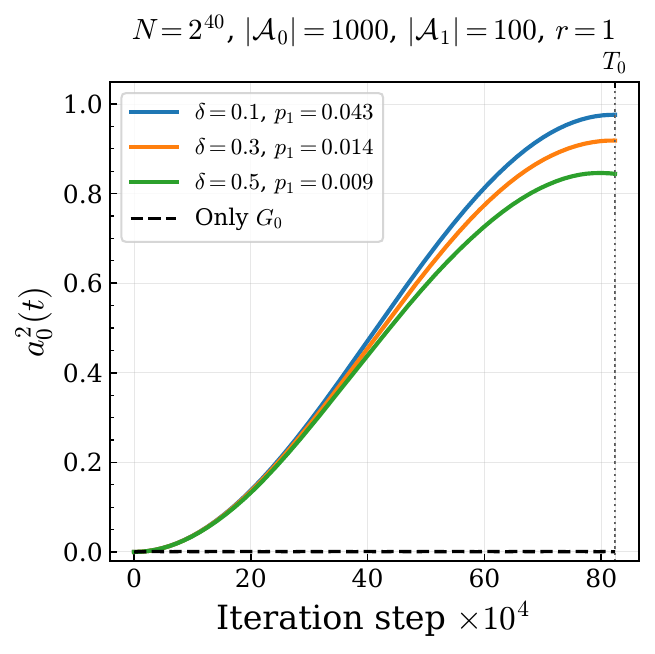}
        \caption{}
        \label{fig:delta_compare_b}
    \end{subfigure}

    \caption{
    Success-probability evolution for different values of \(\delta\), with fixed \(N=2^{40}\), \(|\calA_1|=100\), and \(r=1\). The solid curves show the empirical averages over 50 independent trials, while the dashed curve corresponds to repeatedly applying only \(G_0\). The vertical dotted line marks the stopping time \(T\).
    }
    \label{fig:delta_compare}
\end{figure}

We next consider a three-operator setting with \(N=2^{40}\), \(|\calA_0|=100\), \(|\calA_1|=200\), \(|\calA_2|=50\), and \(r=1\). The tolerance parameter $\delta$ is chosen from \(\{0.005,0.15,0.4\}\). As shown in \Cref{fig:res5a}, the success probability again approaches one near \(T\) and consistently exceeds the lower bound \(1-\delta\), confirming the prediction of \Cref{lemma: delta_k}.

To examine the sampling distribution, \Cref{fig:res5b} reports the empirical usage ratios of the three Grover operators. In all tested cases, the inexpensive operator \(G_0\) accounts for the overwhelming majority of iterations, while \(G_1\) and \(G_2\) are used only sparingly. Moreover, increasing \(\delta\) further reduces the usage frequency of the expensive operators while maintaining a high success probability. These observations demonstrate that the proposed biased sampling strategy successfully concentrates queries on low-cost oracles without sacrificing the amplification performance.
\begin{figure}[htbp]
    \centering

    \begin{subfigure}{0.45\textwidth}
        \centering
        \includegraphics[width=\linewidth]{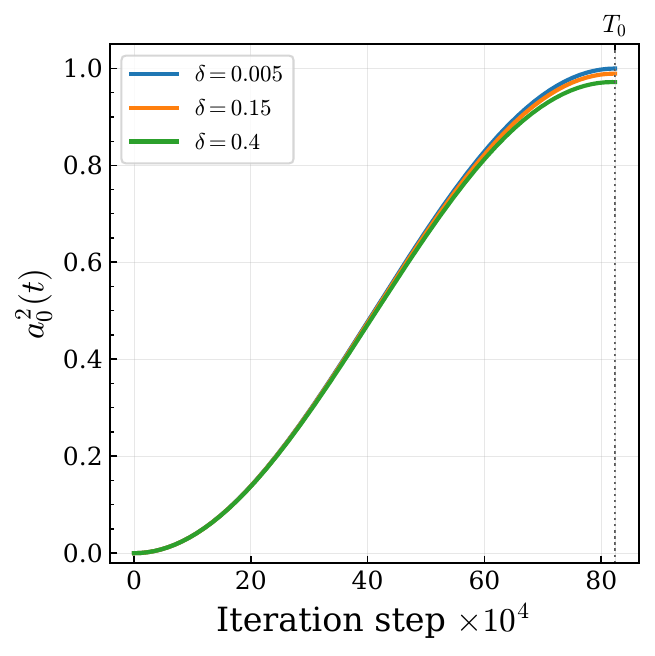}
        \caption{}
        \label{fig:res5a}
    \end{subfigure}
    \hfill
    \begin{subfigure}{0.45\textwidth}
        \centering
        \includegraphics[width=\linewidth]{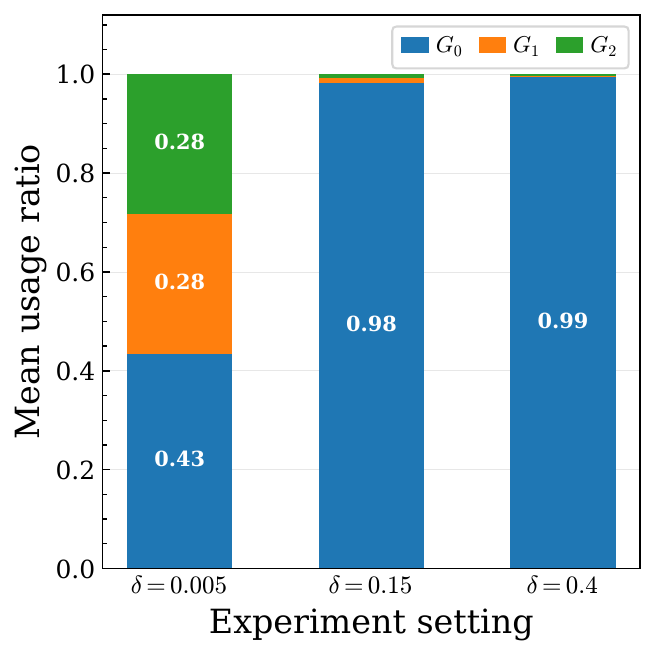}
        \caption{}
        \label{fig:res5b}
    \end{subfigure}

    \caption{
    Numerical results for
    \(N=2^{40}\), \(|\calA_0|=100\), \(|\calA_1|=200\), \(|\calA_2|=50\), \(r=1\), and
    \(\delta\in\{0.005,0.15,0.4\}\).
    (a) Empirical mean success probability for different values of \(\delta\).
    (b) Corresponding mean usage ratios of \(G_0\), \(G_1\), and \(G_2\).}
    \label{fig:res5}
\end{figure}

\section{Conclusion}\label{sec:conclusion}
In this work, we studied a randomized variant of Grover search for solving intersection problems defined by multiple constraints. Instead of constructing a global oracle, which can be costly, our approach directly exploits the Grover operators associated with the individual constraints. An each step, a Grover operator is selected according to a prescribed sampling distribution and applied to the quantum state. 

Under mild assumptions, we proved that the expected success probability approaches one after
\[\Theta\left(\frac\pi 4\sqrt{\frac Nr}\right)\]
iterations by leveraging the special structure of the expected Grover operator. Furthermore, we showed that a highly biased sampling strategy is sufficient to achieve near-optimal performance, making the proposed framework particular attractive when the implementation costs of different Grover operators vary significantly. We also established that the randomized algorithm is asymptotically optimal in query complexity: any deterministic sequence of Grover operators that achieves a comparable success probability must incur the same asymptotic complexity. Numerical simulations were conducted to validate the theoretical results and illustrate the effectiveness of the proposed approach across different parameter regimes. 

These findings indicate that randomization provides an effective and analyzable mechanism for combining multiple Grover operators, offering a practical alternative when the construction of a global oracle is prohibitively expensive. An interesting direction for future work is to explore the random framework on estimating the number of solutions. 
\appendix

    
    
\section{Supplementary Calculations}\label{sec:supply}
\subsection{Derivation of the expression of $\widehat a_0(t)$}\label{app:hat_a_0_t}
To compute $\widehat a_0(t)$ explicitly, we diagonalize the matrix $M$. 
The characteristic polynomial of $M$ is given by
\[
\lambda^2 - 2(\alpha_3^2 - \alpha_0^2)\lambda + 1 = 0,
\]
whose roots are $\lambda_{\pm} = (\alpha_3^2 - \alpha_0^2) \pm \Delta$, where
\[
\Delta = \sqrt{(1 - \alpha_0^2 - \alpha_3^2)^2 - 4\alpha_0^2 \alpha_3^2}.
\]
Thus, $M$ can be diagonalized as
\[
M = P 
\begin{bmatrix}
\lambda_+ & 0 \\
0 & \lambda_-
\end{bmatrix}
P^{-1},\]
where 
\[ P =
\begin{bmatrix}
2\alpha_0\alpha_3 & 2\alpha_0\alpha_3\\
\alpha_0^2+\alpha_3^2-1+\Delta &
\alpha_0^2+\alpha_3^2-1-\Delta
\end{bmatrix},\]
and 
\[P^{-1} = \frac{1}{4\alpha_0\alpha_3\Delta}
\begin{bmatrix}
-\bigl(\alpha_0^2+\alpha_3^2-1-\Delta\bigr) & 2\alpha_0\alpha_3\\
\alpha_0^2+\alpha_3^2-1+\Delta & -2\alpha_0\alpha_3
\end{bmatrix}.\]
    
Using the fact that  
\[
M^t = P 
\begin{bmatrix}
\lambda_+^t & 0 \\
0 & \lambda_-^t
\end{bmatrix}
P^{-1},
\]
we get 
\begin{equation}\label{eq:hat_a_0_t}
    \widehat a_0(t)
    = \frac{\alpha_0(1-\alpha_0^{2}+\alpha_3^2 + \Delta)(\alpha_3^2 - \alpha_0^2 + \Delta)^t 
    + \alpha_0(\alpha_0^{2}-\alpha_3^2 -1 + \Delta)(\alpha_3^2 - \alpha_0^2 - \Delta)^t}{2\Delta}.
\end{equation}

\subsection{Derivation of the expression of $W^n$}\label{app:w}
Let $c=\braket{u}{v}$. The matrix $W$ defined in \cref{eq:W} is of the form
\[
W=
\begin{bmatrix}
1 & 2c\\
-2\overline{c} & 1-4|c|^2
\end{bmatrix},
\]
and its characteristic polynomial is
\[
\lambda^2-(2-4|c|^2)\lambda+1=0.
\]
Define $\theta$ by $\cos\theta=1-2|c|^2$ (so that $\sin^2(\theta/2)=|c|^2$), yielding eigenvalues $\lambda_\pm=e^{\pm i\theta}$. By the Cayley--Hamilton theorem,
\[
W^2-2\cos\theta\,W+I=0,
\]
which implies the recurrence
\[
W^n=2\cos\theta\,W^{n-1}-W^{n-2},\quad n\ge2,\quad W^0=I,\ W^1=W.
\]
Solving it gives
\[
W^n=\frac{\sin(n\theta)}{\sin\theta}W-\frac{\sin((n-1)\theta)}{\sin\theta}I.
\]
Substituting $W$ yields
\[
W^n=
\frac{1}{\sin\theta}
\begin{bmatrix}
\sin(n\theta)-\sin((n-1)\theta) & 2c\,\sin(n\theta)\\
-2\overline{c}\,\sin(n\theta) & (1-4|c|^2)\sin(n\theta)-\sin((n-1)\theta)
\end{bmatrix}.
\]
Therefore,
\[
W^n=
\frac{1}{\sin\theta}
\begin{bmatrix}
\sin(n\theta)-\sin((n-1)\theta)
&
2\braket{u}{v}\sin(n\theta)
\\[1ex]
-2\braket{v}{u}\sin(n\theta)
&
\bigl(1-4|\braket{u}{v}|^2\bigr)\sin(n\theta)-\sin((n-1)\theta)
\end{bmatrix}
\]

\subsection{Additional estimates for deterministic prefix variations}\label{app:deterministic-prefix}
In main text, we consider the quantum state
\[
(G_0G_0G_1)^nG_0\ket{\psi},\quad \forall n = 0, 1, \dots,
\]
and show that the success probability $\psucc^{(n)} := \left|\bra{\calI_0}(G_0G_0G_1)^nG_0\ket{\psi}\right|^2$ can be significantly smaller than $1$ for all $n$. Here we consider the following two variant sequences
\[
G_1(G_0G_0G_1)^nG_0\ket{\psi}\quad  \text{and}\quad G_0G_1(G_0G_0G_1)^nG_0\ket{\psi}.
\]
First, since $G_1^\dagger
=
O_1(2\ket{\psi}\bra{\psi}-I)$, we have
\[
\begin{aligned}
\|G_1^\dagger \ket{\calI_0} - \ket{\calI_0}\|_2 = \left\| O_1 \left(2\sqrt{\frac{r}{N}}\ket{\psi} - \ket{\calI_0}\right)-\ket{\calI_0}\right\|_2 =
2\sqrt{\frac{r}{N}}\|O_1\ket{\psi}\|_2 \leq 2\sqrt{\frac{r}{N}}.
\end{aligned}
\]
Similarly,
\[
\|G_0^\dagger \ket{\calI_0} - \ket{\calI_0}\|_2
\leq
2\sqrt{\frac{r}{N}}.
\]
Consequently,
\[
\begin{aligned}
\|G_1^\dagger G_0^\dagger \ket{\calI_0} - \ket{\calI_0}\|_2
&=
\|(G_0(G_1-I)+(G_0-I))^\dagger\ket{\calI_0}\|_2 \\
&\leq
\|G_0\|_2
\cdot
\|(G_1-I)^\dagger\ket{\calI_0}\|_2
+
\|(G_0-I)^\dagger\ket{\calI_0}\|_2\leq
4\sqrt{\frac{r}{N}}.
\end{aligned}
\]
For the first prefix variation, we have
\[
\begin{aligned}
& \left|
\bra{\calI_0}
G_1(G_0G_0G_1)^nG_0
\ket{\psi}
\right|^2\\
\leq &
\left(
\left|
\bra{\calI_0}
(G_1-I)(G_0G_0G_1)^nG_0
\ket{\psi}
\right|
+
\left|
\bra{\calI_0}
(G_0G_0G_1)^nG_0
\ket{\psi}
\right|
\right)^2 \\
\leq & 
\left(
\|(G_1-I)^\dagger\ket{\calI_0}\|_2
\cdot
\|(G_0G_0G_1)^nG_0\ket{\psi}\|_2
+
\sqrt{\psucc^{(n)}}
\right)^2 \\
\leq & 
\left(
2\sqrt{\frac{r}{N}}
+
\sqrt{\psucc^{(n)}}
\right)^2.
\end{aligned}
\]
For the second prefix variation, we similarly have
\[
\begin{aligned}
& \left|
\bra{\calI_0}
G_0G_1(G_0G_0G_1)^nG_0
\ket{\psi}
\right|^2\\
\leq & 
\left(
\left|
\bra{\calI_0}
(G_0G_1-I)(G_0G_0G_1)^nG_0
\ket{\psi}
\right|
+
\left|
\bra{\calI_0}
(G_0G_0G_1)^nG_0
\ket{\psi}
\right|
\right)^2 \\
\leq & 
\left(
\|(G_0G_1-I)^\dagger\ket{\calI_0}\|_2
\cdot
\|(G_0G_0G_1)^nG_0\ket{\psi}\|_2
+
\sqrt{\psucc}
\right)^2 \\
\leq &
\left(
4\sqrt{\frac{r}{N}}
+
\sqrt{\psucc}
\right)^2.
\end{aligned}
\]
Therefore, both variations are controlled by the base-sequence success probability, up to an additional error of order $\mathcal O(\sqrt{r/N})$. Together with the upper bound of $\psucc^{(n)}$, it shows that the two variations also fail to amplify the success probability close to $1$ in the regime $r \ll r_1 < N$.


\bibliographystyle{siamplain}
\bibliography{references}
\end{document}